\documentclass{article}
\usepackage{amsmath,amssymb,mathrsfs,amsthm}

\theoremstyle{plain}
\newtheorem{thm}{Theorem} 
 
\newtheorem{prop}{Proposition}[section]
\newtheorem{lemma}[prop]{Lemma}

\theoremstyle{definition} 
 
\newtheorem{asm}{Assumption}

\theoremstyle{remark}
\newtheorem{remark}[prop]{Remark}

\newcommand{\RR}{\mathbb{R}}
\newcommand{\CC}{\mathbb{C}}
\newcommand{\ZZ}{\mathbb{Z}}

\newcommand{\Cinf}{C^{\infty}}
\newcommand{\ve}{\varepsilon}
\newcommand{\supp}{\mathop{\mathrm{supp}}}
\newcommand{\RE}{\mathrm{Re}\,}
\newcommand{\IM}{\mathrm{Im}\,}

\begin{document}

\title{Semiclassical shape resonances for magnetic Stark Hamiltonians}
\author{Kentaro Kameoka and Naoya Yoshida}

\date{}

\maketitle

\begin{abstract}
We study shape resonances of two-dimensional magnetic Stark Hamiltonians in the semiclassical limit. 
The magnetic field is assumed to be constant and 
the scalar potential is a perturbation of a linear potential. 
Under the assumption that the scalar potential has potential wells, 
the existence of a one-to-one correspondence between shape resonances of the Hamiltonian and discrete eigenvalues of 
a certain reference operator is proved. 
This implies the Weyl law for the number of resonances and 
the asymptotic behavior of the real parts of resonances near the bottom of a potential well. 
Resonances are studied as complex eigenvalues of complex distorted Hamiltonians, 
which is defined by the complex translation outside a compact set. 
\end{abstract}

\section{Introduction}\label{sec-1}

In this paper, we study resonances of the following magnetic Stark Hamiltonian
\[
P=P(h)=\frac{1}{2}(hD_x+By)^2+\frac{1}{2}(hD_y)^2+x+V(x, y)\,\,\,\,\, \mathrm{on}\,\,\,L^2(\RR^2), 
\]
where $B>0$ is a fixed constant, $h>0$ is a small semiclassical parameter, 
$D=-i\partial$ and $V$ is a real-valued function. 
We discuss resonances generated by the potential wells and study their semiclassical ($h\to 0$) behavior. 

In quantum mechanics, complex resonances are complex numbers which correspond to quasi-steady states. 
The real part of a resonance is energy, and the imaginary part describes the decay rate. 
Resonances are also called scattering poles and are closely related to the scattering theory. 
Mathematical study of resonances is a fascinating research field  
(see, for example, Cycon, Froese, Kirsch and Simon \cite{CFKS} and Dyatlov and Zworski \cite{DZ}). 

Semiclassical analysis studies the asymptotic behavior of the differential operator of the form 
$p(x, hD)$, where $h>0$ is the small semiclassical parameter. 
This method is used for the deep study of quantum-classical correspondence for the Schr\"{o}dinger operators.
Microlocal analysis and pseudodifferential operators are central methods for semiclassical analysis
(see, for example, Dimassi and Sj\"{o}strand \cite{DS} and Zworski \cite{Z}).

Our operator $P(h)$ with $V=0$ describes a quantum particle in the $xy$-plane 
in the constant magnetic field in the $z$-direction and 
the constant electric field in the $x$ direction. 
The particle exhibits the drift motion which is perpendicular to the electric field. 

For comparison, we set 
\[
P(h;B,\omega)=\frac12(hD_x+By)^2+\frac12(hD_y)^2+\omega x+V(x, y)\,\,\,\,\, \mathrm{on}\,\,\,L^2(\RR^2). 
\]
In the decaying potential case ($\omega=0$), it is well-known that the essential  spectrum of the operator $P(h;B,0)$ is given by
$$
\sigma_{ess}\bigl(P(h;B,0)\bigr)=\bigcup_{n\in\ZZ_{\ge 0}}(n+\frac12)hB.
$$ 
The numbers $(n+\frac12)hB, n\in\ZZ_{\ge 0}=\{0,1,2,\cdots\}$, 
called Landau levels, are eigenvalues of infinite multiplicity (see, for instance, \cite{AHS}). 
Outside the Landau levels, eigenvalues with finite multiplicity may appear caused by the perturbation $V$.
The asymptotic distribution of eigenvalues has been studied in various regimes.  
In particular, the semiclassical asymptotics of eigenvalues is studied in 
Helffer and Sj\"{o}strand \cite{HeSjo2}, Ivrii \cite{Iv}, Raikov \cite{Rai}.   

For $\omega\not=0$, the situation completely changes and $\sigma_{ess}\bigl(P(h;B,\omega)\bigr)=\RR$. 
Note that the absence of embedded eigenvalues is studied by
Adachi and Tsujii \cite{AT}, Assel, Dimassi and Fernandez \cite{ADF},  
Dimassi, Kawamoto and Petkov \cite{DKP}, Dimassi and Petkov \cite{DP2}. 
From the physical point of view, it is expected that $V(x,y)$ creates resonances $z\in\CC, \IM z\leq 0$, 
and it is natural to study the distribution of resonances.
To the best of our knowledge, only a few works treat magnetic Stark resonances.
The shape resonances for the magnetic Stark Hamiltonian with $B$ large enough 
have been studied by Wang \cite{Wa1}.
Dimassi and Petkov \cite{DP1} established 
a connection between the resonance and the spectral shift function and 
obtained a trace formula for resonances of the magnetic Stark Hamiltonian with $B$ large enough. 
Ferrari and Kovarik showed Gaussian small estimate of the width of resonances when $\omega\to 0$
in \cite{Fe-Ko0}, 
and proved the resonance expansion of the matrix elements of the propagator for large time in \cite{Fe-Ko}. 
% showed that the width of resonances is exponentially decaying with respect to $\omega^{-1}$. 
% More precisely, they
% show that, if $z$ is a resonance then there exists two constants $C_0, C_1$ such that :
% $$
% C_1e^{-C_0\frac{B}{\omega^2}} \leq \IM z \leq 0.
% $$
We mention the two-parameter problem of $h\to 0$ and $B\to \infty$. 
Ivrii \cite{Iv} studied the distribution of eigenvalues of $P(h; B, 0)$ in this limit. 
The two-parameter problem for resonances of magnetic Stark Hamiltonians $P(h; B, \omega)$ 
is an open problem.   

Resonances of magnetic Stark Hamiltonians are defined by 
the complex translation on the whole plane in Dimassi and Petkov \cite{DP1}, Ferrari and Kovarik \cite{Fe-Ko0}, 
and by complex scaling on the whole plane in Wang \cite{Wa1}. 
In this paper, we define resonances of magnetic Stark Hamiltonians by 
the complex translation outside a compact set. 
This allows us to discuss non-globally analytic potentials. 
Recall that the resonances are described by the trapped set of the classical Hamilton flow 
in the semiclassical limit. 
Thus a complex distortion outside a compact set is more suitable for the semiclassical study of resonances 
since it does not distort the trapped set. 
In this case, rigorous proofs of the basic operator-theoretic properties of the distorted operator are 
more involved than the case of the global complex translation. 
This is one of our aims of this paper and we discuss it in Section~\ref{sec-2}.  

% We finally mention the two-parameter problem of $h\to 0$ and $B\to \infty$. 
% We set $\mu=hB$ and $\varepsilon=B^{-1}h$. 
% V. Ivrii studied the distribution of eigenvalues of $P(h; B, 0)$ in the interval $[a,b]$ when $\mu\leq1$ and $h\rightarrow0$. 
% Here $[a,b]$ is a disjoint interval of the essential spectrum of $P_{h,B}$. 
% On the other hand, if If $\mu=hB\rightarrow\infty$ and $\varepsilon\rightarrow0$, 
% the canonical transformation shows that the operator $\mu^{-1}P_{h,B}$ is  unitarily equivalent to 
% $$
% \tilde{P}(\varepsilon,\mu)=(D_x^2+x^2)\otimes I_y+\mu^{-1}V^w(\varepsilon),
% $$
% where $V^w$ is an $\varepsilon$-pseudodifferential operator.
% Thus we can apply the method of Feshbach reduction to the operator $P(h; B, 0)$ 
% to obtain the distribution of eigenvalues near each Landau levels. 
% The two-parameter problem for resonances of magnetic Stark Hamiltonians $P(h; B, \omega)$ is an open problem.   

We are interested in whether the width of resonances are exponentially 
small with respect to $h$.
To define complex resonances of $P$, we assume the following. 
\begin{asm}\label{asm-res}
There exist $R_0>0$ and $\delta_0>0$ such that the following hold. 

\noindent
(i). $V\in \Cinf(\RR^2; \RR)$ has an analytic continuation with respect to $x$ to the region 
\[
\Omega_{R_0, \delta_0}=\{(x, y)\in \CC \times \RR|\,|\RE x|^2+y^2>R_0^2, |\IM x|<\delta_0\}
\]
and belongs to $\Cinf$-class with respect to $(x, y)$ in that region. 

\noindent
\vspace{0.2cm}
(ii).  $\lim_{|(x, y)|\to \infty}\partial V(x, y)=0$ in $\Omega_{R_0, \delta_0}$. 

\noindent
\vspace{0.2cm}
(iii). $\sup_{y\in \RR} |V(0, y)|<\infty$. 

\end{asm}

Under this assumption, we can define the resonances of $P$ as follows.  
We set $R_{+}(z)=(z-P)^{-1}$ for $\IM z>0$.
Then for any non-zero $\chi_1, \chi_2\in \Cinf_c(\RR^2)$, the cutoff resolvent $\chi_1 R_{+}(z)\chi_2$ has 
a meromorphic continuation from the upper half plane to the region 
$\{z\in \CC|\, \IM z>-\delta_0\}$. Then the poles of $\chi_1 R_{+}(z)\chi_2$ are resonances of $P$. 
The set of resonances is denoted by $\mathrm{Res}(P)$.  
We define the multiplicity $m_z$ of resonance by
\[
m_z=\mathrm{rank}\frac{1}{2\pi i}\oint_{|\zeta-z|=c} \chi_1 R_+(\zeta) \chi_2 d\zeta 
\]
for $0<c\ll 1$. Then $m_z$ is independent of the choice of $\chi_1, \chi_2$. 
These are proved in Proposition~\ref{prop2-3}.

We set the classical Hamiltonian 
\[
p(x, y, \xi, \eta)=\frac{1}{2}(\xi+By)^2+\frac{1}{2}\eta^2+x+V(x, y). 
\]

We denote the trapped set for the classical flow in the energy interval $[a,b]$ by $K_{[a,b]}$.
Thus $K_{[a,b]}$ is the set of all $(x_0, y_0, \xi_0, \eta_0)\in T^*\RR^2$ 
such that $a\le p(x_0, y_0, \xi_0, \eta_0)\le b$ and $\sup_{t\in \RR}|(x(t), y(t))|<\infty$, 
where $(x(t), y(t), \xi(t), \eta(t))$ is the solution of the Hamilton equation for 
$p(x, y, \xi, \eta)$ with the initial value $(x_0, y_0, \xi_0, \eta_0)$.

We study the shape resonance model. 
\begin{asm}\label{asm-shape}
We fix $a<b$. We assume $\{(x,y) \in \RR^2|x+V(x, y)\le b\}=\mathcal{G}^{\mathrm{int}}\cup\mathcal{G}^{\mathrm{ext}}$, 
where $\mathcal{G}^{\mathrm{int}}$ is compact, 
$\mathcal{G}^{\mathrm{ext}}$ is closed, and $\mathcal{G}^{\mathrm{int}}\cap\mathcal{G}^{\mathrm{ext}}=\emptyset$. 
Moreover, we assume 
$K_{[a,b]}\cap\{(x, y, \xi, \eta)|(x, y)\in \mathcal{G}^{\mathrm{ext}}\}=\emptyset$.
\end{asm}
This assumption means that there exist potential wells on $\mathcal{G}^{\mathrm{ext}}$ and 
there are no trapped trajectories outside the wells in the energy interval in consideration. 
The latter condition enables us to concentrate on resonances generated by the potential wells, 
which are called shape resonances.  

Our first main theorem is the Weyl-type asymptotics for the magnetic Stark shape resonances.  
\begin{thm}\label{thm-1}
Under Assumption~\ref{asm-res} and Assumption~\ref{asm-shape}, there exists $S>0$ such that 
the numbers of resonances of $P(h)$ in $[a, b]-i[0, e^{-S/h}]$ satisfies 
\[
\#\left(\mathrm{Res}(P(h))\cap ([a, b]-i[0, e^{-S/h}])\right)
=(2\pi h)^{-2} \mathrm{Vol}(K_{[a,b]})+o(h^{-2})
\]
when $h \to 0$.  
\end{thm}
\begin{remark}\label{rem-sharp}
If we assume that $\partial p\not=0$ on 
$p^{-1}(\{a, b\})\cap \{(x, y, \xi, \eta)\,|(x, y)\in \mathcal{G}^{\mathrm{int}}\}$, 
the remainder is improved to $\mathcal{O}(h^{-1})$. 
\end{remark}

We next discuss resonances generated from the bottom of a well. 
\begin{asm}\label{asm-nondege} 
We fix $E\in \RR$. We assume $\{(x,y) \in \RR^2|x+V(x, y)=E\}=\mathcal{G}^{\mathrm{int}}\cup\mathcal{G}^{\mathrm{ext}}$, 
where $\mathcal{G}^{\mathrm{int}}=\{(x_0,y_0)\}$, 
$\mathcal{G}^{\mathrm{ext}}$ is closed, and $\mathcal{G}^{\mathrm{int}}\cap\mathcal{G}^{\mathrm{ext}}=\emptyset$. 
Moreover, we assume that the Hessian matrix 
${\rm Hess\,}V(x_0,y_0)$ is positive definite.
\end{asm}

Let $\lambda_1,\lambda_2>0$ be the two eigenvalues of ${\rm Hess\,}V(x_0,y_0)$. 
Set 
\begin{equation}\label{alphaj}
\alpha_{j}=\sqrt{\frac{B^2+\lambda_1+\lambda_2+(-1)^j
\sqrt{(B^2+\lambda_1+\lambda_2)^2-4\lambda_1\lambda_2}}{2}},\quad j=1,2.
\end{equation}
We give the asymptotic behavior of the resonances of $P(h)$
in the interval
$(E-h^{\delta}, E+h^{\delta})-i[0,e^{-S/h}]$
for any $\delta >0$ and for some $S>0$.

\begin{thm}\label{thm-2}
Assume that Assumption~\ref{asm-res} and Assumption~\ref{asm-nondege} hold, 
and $\alpha_1,\alpha_2$ are $\mathbb{Z}$-independent. 
Then there exist $S>0$ and a real-valued smooth function
$$
F(\tau_1,\tau_2;h)\sim F_0(\tau_1,\tau_2)+F_1(\tau_1,\tau_2)h+\cdots,\quad \tau=(\tau_1, \tau_2)\in\RR^2 
$$
when $h\to 0$ with $F_j={\rm const.}$ for $\tau_1+\tau_2\geq1$, 
$F_0(\tau)=\sum\alpha_j\tau_j+\mathcal{O}(|\tau|^2)$ when $|\tau|\to 0$ 
and $F_0>0$ when $\tau_j\geq0, \tau\not=0$
such that for every fixed $\delta>0$ and for $h>0$ small enough, the real part of the resonances of $P(h)$ in 
$(E-h^{\delta}, E+h^{\delta})-i[0,e^{-S/h}]$ are of the form
$$
E+F\Bigl(\bigl(k_1+\frac{1}{2}\bigr)h,\bigl(k_2+\frac{1}{2}\bigr)h;h\Bigr)+\mathcal{O}(h^{\infty}),\,k_1,k_2\in\mathbb{Z}_{\ge 0}.
$$
\end{thm}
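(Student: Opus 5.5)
The proof proceeds in two stages. First we reduce resonances to eigenvalues of a reference operator. Then we compute the spectrum of that reference operator near the bottom of the well with a magnetic Birkhoff normal form.

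\emph{Stage one: the reduction.} Assumption~\ref{asm-nondege} is a degenerate case of Assumption~\ref{asm-shape}. Take the energy window $[E-h^{\delta},E+h^{\delta}]$, or a fixed small interval $[E-\epsilon,E+\epsilon]$. For $\epsilon$ small the sublevel set $\{x+V\le E+\epsilon\}$ splits into a small compact component $\mathcal{G}^{\mathrm{int}}_\epsilon$ around $(x_0,y_0)$ and an exterior part. The same one-to-one correspondence that underlies Theorem~\ref{thm-1} then applies, with the trapped set outside the well empty. To build the reference operator, fill the exterior region: set $\tilde P=P+W$, where $W\ge 0$ is supported away from $\mathcal{G}^{\mathrm{int}}_\epsilon$ and makes $x+V+W>E+2\epsilon$ outside a neighbourhood of the well. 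This gives a self-adjoint operator with discrete spectrum near $E$.

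The correspondence works as follows. Each eigenvalue of $\tilde P$ in $(E-h^\delta,E+h^\delta)$ lies within $\mathcal{O}(e^{-S/h})$ of a resonance of $P$, and vice versa, with multiplicities respected. The ingredients are:
\begin{itemize}
\item Agmon estimates for eigenfunctions of $\tilde P$;
\item a nontrapping resolvent estimate for the complex-distorted $P$ outside the well, polynomial in $h^{-1}$, obtained by an escape-function argument using Assumption~\ref{asm-shape};
\item a Grushin problem, or comparison of spectral projectors via contour integrals.
\end{itemize}
Because the correspondence is only exponentially precise, the real parts of the resonances agree with the eigenvalues of $\tilde P$ up to $\mathcal{O}(h^\infty)$. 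This stage is taken from the machinery proved for Theorem~\ref{thm-1}.

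\emph{Stage two: spectrum of $\tilde P$ near $E$.} This is a purely local, Helffer--Sj\"ostrand-type problem. Near $(x_0,y_0)$ write $x+V=E+\frac12\langle \mathrm{Hess}\,V\,(X,Y),(X,Y)\rangle+\mathcal{O}(|(X,Y)|^3)$. The symbol $p$ then has a nondegenerate minimum $E$ at $\rho_0=(x_0,y_0,-By_0,0)$. Its quadratic part is
\[
q=\tfrac12(\xi+BY)^2+\tfrac12\eta^2+\tfrac12\langle \mathrm{Hess}\,V\,(X,Y),(X,Y)\rangle,
\]
a positive definite quadratic form on $T^*\RR^2$ with its symplectic structure.

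By Williamson's theorem, $q$ is symplectically equivalent to $\sum_j \frac{\alpha_j}{2}(\xi_j^2+x_j^2)$, where $\pm i\alpha_j$ are the eigenvalues of the Hamilton matrix $J\,\mathrm{Hess}\,q$. Computing the characteristic polynomial gives
\[
\mu^4-(B^2+\lambda_1+\lambda_2)\mu^2+\lambda_1\lambda_2=0
\]
for $\mu=\alpha$, which yields the formula \eqref{alphaj}.

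Since $\alpha_1,\alpha_2$ are $\ZZ$-independent, there are no resonances at any order. The semiclassical Birkhoff normal form therefore applies: there is a microlocally unitary $h$-FIO $U$ with
\[
U^*\tilde P U=E+F(I_1,I_2;h)+\mathcal{O}(h^\infty)
\]
microlocally near $\rho_0$, where $I_j=\frac12(h^2D_{x_j}^2+x_j^2)$. The $I_j$ have joint spectrum $((k_1+\tfrac12)h,(k_2+\tfrac12)h)$, and $F_0(\tau)=\sum_j\alpha_j\tau_j+\mathcal{O}(|\tau|^2)$.

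We cut $F$ off, setting it constant for $\tau_1+\tau_2\ge1$ after rescaling the small neighbourhood. The required properties of $F$ are then:
\begin{itemize}
\item $F$ is real, which follows from self-adjointness of the normal form;
\item $F_0>0$ on the quadrant minus the origin, which holds because $F_0$ is close to the linear part on the support.
\end{itemize}

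Because $\delta>0$, the window $(E-h^\delta,E+h^\delta)$ only sees $|\tau|\lesssim h^\delta$. A standard quasimode/microlocalization argument then matches eigenvalues of $\tilde P$ there, with multiplicity, to the values $E+F((k+\frac12)h;h)+\mathcal{O}(h^\infty)$. Eigenfunctions of $\tilde P$ with energy below $E+h^\delta$ concentrate microlocally near $\rho_0$, so no spurious eigenvalues appear.

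\emph{Main obstacle.} The delicate point is the counting in the matching step. We must show that the spectral projector of $\tilde P$ on the window has the same rank as that of the normal form model, with $\mathcal{O}(h^\infty)$ errors. The model's eigenvalue spacing can be smaller than $h$, so the eigenvalues cannot be separated individually. I would handle this by comparing $\tilde P$ and the conjugated model through a Grushin problem on the finite-dimensional range. Alternatively, one can use a min-max argument with $\mathcal{O}(h^\infty)$-quasimodes in both directions.
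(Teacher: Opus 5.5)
Your proposal is correct and follows essentially the paper's route: reduce to the eigenvalues of the flattened reference operator (the paper's $P^{\mathrm{int}}$) through the cluster-wise rank identity $\mathrm{rank}\,\Pi_j^{\theta}=\mathrm{rank}\,\Pi_j^{\mathrm{int}}$ from the proof of Theorem~\ref{thm-1}, then bring the quadratic part to $\sum_j\frac{\alpha_j}{2}(x_j^2+\xi_j^2)$ by linear symplectic classification and apply Sj\"ostrand's Birkhoff normal form result. The differences are presentational. The paper cites \cite{Sj} and \cite[Chapter~14]{DS} for the eigenvalue asymptotics, including the rank/counting step you flag as the main obstacle, instead of sketching it. Like you, it tacitly uses the exterior non-trapping condition of Assumption~\ref{asm-shape} on an interval around $E$, which Assumption~\ref{asm-nondege} does not state explicitly.
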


Theorem~\ref{thm-1} and Theorem~\ref{thm-2} 
are proved as corollaries of a one-to-one correspondence of shape resonances and discrete eigenvalues 
of a certain reference operator. 
In the decaying potential case, Helffer and Sj\"{o}strand \cite{HeSjo} and 
Stefanov \cite{S} \cite{S2} established this type of one-to-one correspondence.  
Nakamura, Stefanov and Zworski \cite{NSZ} provided a simplified proof and we follow the strategy of 
\cite{NSZ} to prove Theorem~\ref{thm-1} and Theorem~\ref{thm-2}. 
We also use arguments in Kameoka \cite{K1}, 
where a similar one-to-one correspondence is proved for Stark Hamiltonians. 

This paper is organized as follows.
In Section~2, we define the magnetic Stark resonances by the complex distortion outside a compact set. 
In Section~3, we apply the method in Section~2 to the semiclassical study of resonances for magnetic Stark Hamiltonians. 
We first prove the non-trapping resolvent estimate. 
We then study the shape resonance model and prove the existence of a one-to-one correspondence of 
shape resonances and the eigenvalues of a reference operator. This implies Theorem~\ref{thm-1} and Theorem~\ref{thm-2}.

\section{Exterior complex translation method for magnetic Stark Hamiltonians}\label{sec-2}
\subsection{Definition and basic properties} 
We define resonances of $P$ by the complex translation outside a compact set. 
In this section, we set $h=1$.  
Thus the complex distortion in this section is applicable both to semiclassical and non-semiclassical problems. 
Take $\chi_0\in \Cinf(\RR^2)$ such that $\chi_0(x, y)=1$ when $x^2+y^2<(R_0+1)^2$, where $R_0$ is that in Assumption~\ref{asm-res}. 
We define a vector field 
\[
v(x, y)=(v_1(x,y),v_2(x,y))=(1-\chi_0(x, y), 0). 
\]
We set 
\begin{align*}
\Phi_{\theta}(x, y)=(x, y)+\theta v(x, y)=(x+\theta (1-\chi_0(x, y)), y),
\end{align*}
\begin{align*}
U_{\theta}f(x)&=\det \Phi_{\theta}'(x, y)^{1/2}f(\Phi_{\theta}(x, y))\\
              &=(1-\theta\partial_x \chi_0(x, y))^{1/2}f(x+\theta (1-\chi_0(x, y)), y), 
\end{align*}
which is unitary on $L^2(\RR^2)$ for real $\theta$ with small $|\theta|$. 
Then we define the distorted operator 
\[
P_{\theta}=U_{\theta}P U_{\theta}^{-1}
\]
for real $\theta$ with small $|\theta|$. 
Then $P_{\theta}$ is a differential operator whose coefficients are analytic with respect to $\theta$ 
with $|\IM \theta|<\delta_0$ under Assumption~\ref{asm-res}. 
Thus $P_{\theta}$ for complex $\theta$ is defined as a non-self-adjoint differential operator. 
As an unbounded operator, we define $P_{\theta}$ as the closure of $P_{\theta}$ on $\Cinf_c(\RR^2)$. 
In this section, we confirm that $P_{\theta}$ is analytic with respect to $\theta$ in the operator-theoretic sense 
and the spectrum of $P_{\theta}$ is discrete in $\{z\in \CC|\, \IM z>\IM \theta\}$ 
for $\IM \theta<0$. 
We then show that resonances of $P$ coincide with the discrete eigenvalues of $P_{\theta}$.

\begin{prop}\label{prop2-1}
For $|\IM \theta|<\delta_0$ and small $|\RE \theta|$, $P_{\theta}$ is an analytic family of type(A) and 
$P_{\theta}^*=P_{\bar{\theta}}$.
\end{prop}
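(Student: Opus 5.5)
We compute $P_\theta$ explicitly and show it has a $\theta$-independent domain on which it depends analytically on $\theta$. Set $\phi=1-\chi_0$. Under the change of variables $x\mapsto x+\theta\phi(x,y)$, the operators $D_x$ and $D_y$ become
\[
(1+\theta\partial_x\phi)^{-1}D_x,\qquad D_y-\theta\,\partial_y\phi\,(1+\theta\partial_x\phi)^{-1}D_x,
\]
conjugated by the Jacobian factor $(1+\theta\partial_x\phi)^{1/2}$. Since $y$ is not changed, the magnetic term $By$ is untouched, and the potential becomes $x+\theta\phi+V(x+\theta\phi,y)$. Thus
\[
P_\theta=\frac12\bigl(a_\theta D_x+By+\text{l.o.t.}\bigr)^2+\frac12\bigl(D_y-b_\theta D_x+\text{l.o.t.}\bigr)^2+x+\theta\phi+V(x+\theta\phi,y),
\]
where $a_\theta$ and $b_\theta$ are smooth and bounded with all derivatives, and are analytic in $\theta$. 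Moreover $a_\theta-1$ and $b_\theta$ are supported in the compact shell where $\nabla\chi_0\neq0$, provided we choose $\chi_0$ with $1-\chi_0$ constant outside a compact set (as is implicit). The potential $V(x+\theta\phi,y)$ is analytic in $\theta$ for $|\IM\theta|<\delta_0$, with $|\RE x|^2+y^2>R_0^2$ on $\supp\phi$, by Assumption~\ref{asm-res}(i).

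\emph{Step 1: reference operator and domain.} We write $P_\theta=P_0+W_\theta$, where $P_0=P$ with $V$ replaced by a bounded real potential, e.g.\ $P_0=\frac12(D_x+By)^2+\frac12D_y^2+x$, which is essentially self-adjoint on $\Cinf_c$. The potential part of $W_\theta$ has the form
\[
\theta+V(x+\theta,y)-V(x,y)+(\text{compactly supported terms}).
\]
Its bounded part is controlled by Assumption~\ref{asm-res}(ii)–(iii). Indeed, $\partial V\to0$ gives $|V(x+\theta,y)-V(x,y)|\le C|\theta|$ at infinity. Also $V(x,y)=V(0,y)+\int_0^x\partial_xV$ grows at most sublinearly in $x$, which is relatively bounded with respect to $x$ and $P_0$; with some care one can also absorb $V$ into $P_0$ directly. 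The second-order part of $W_\theta$ is compactly supported, with coefficients of order $O(|\theta|)$. By local elliptic estimates for the magnetic Laplacian, $\|\chi D^\alpha u\|\le C(\|P_0u\|+\|u\|)$ for $|\alpha|\le2$ and $\chi\in\Cinf_c$. Hence $W_\theta$ is $P_0$-bounded with relative bound $C|\theta|$ on the second-order part. We define $\mathcal D=D(\overline{P+V})$, i.e.\ the domain of the self-adjoint $P$ (closure from $\Cinf_c$). For small $|\theta|$ the Kato–Rellich argument shows $P_\theta$ is closed on $\mathcal D$, its closure from $\Cinf_c$ has exactly this domain, and $\theta\mapsto P_\theta u$ is analytic for each $u\in\mathcal D$. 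That is the definition of type (A).

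\emph{Step 2: extending to $|\IM\theta|<\delta_0$ with small $\RE\theta$.} The relative bound of the compact-support second-order part is small only for small $|\theta|$, so the perturbation argument above does not reach $|\IM\theta|$ up to $\delta_0$. To handle this, we note that the only $\theta$-dependence that is large is in the potential term $V(x+\theta,y)+\theta$ at infinity, which is bounded uniformly. The second-order coefficients instead depend on $\theta\partial\phi$, and we choose $\chi_0$ with $|\nabla\chi_0|$ small (a slowly varying cutoff) so that $|\theta\nabla\phi|\ll1$ for $|\theta|<\delta_0$. This keeps the principal symbol uniformly elliptic (perturbatively close to $\frac12|\xi|^2$ locally), and we then obtain the graph-norm equivalence $\|P_\theta u\|+\|u\|\sim\|P_0u\|+\|u\|$ on $\Cinf_c$ uniformly in $\theta$. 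This step, the uniform graph-norm equivalence for complex $\theta$, is the main obstacle. It needs the local elliptic estimate together with commutator control of the non-compact magnetic and Stark terms across the cutoff, which I would handle via the identity $[D_x,y]=0$ and the fact that $[D_y,By]$ is bounded.

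\emph{Step 3: adjoint.} For real $\theta$, $P_\theta$ is unitarily equivalent to $P$ and hence self-adjoint. Formally, $(P_\theta)^\dagger=P_{\bar\theta}$ as differential operators on $\Cinf_c$, since the coefficients are real-analytic in $\theta$ and therefore satisfy $\overline{c(\theta)}=c(\bar\theta)$. Both $P_\theta$ and $P_{\bar\theta}$ are closed on the same domain $\mathcal D$ with $\Cinf_c$ a core. We always have $P_{\bar\theta}\subset P_\theta^*$. For equality, we take $\lambda$ with $|\IM\lambda|$ large, so that $\lambda-P_\theta$ and $\lambda-P_{\bar\theta}$ are invertible by the Step 2 estimates and Neumann series from $\lambda-P_{\RE\theta}$. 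Then $\lambda-P_{\bar\theta}\subset(\bar\lambda-P_\theta)^*$, and a bijective operator contained in an injective one forces equality, so $P_\theta^*=P_{\bar\theta}$.
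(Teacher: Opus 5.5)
Your argument is correct in substance, and its core is the paper's. $P_\theta-P$ is the bounded potential $\theta v_1+V_\theta-V$ (bounded by Assumption~\ref{asm-res}(ii)) plus a compactly supported second-order operator with $\mathcal{O}(|\theta|)$ coefficients. A local elliptic estimate controls the latter, and Kato--Rellich does the rest.

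One slip has to be corrected: the reference operator must be $P$ itself, not $P_0$ with $V$ removed. Assumption~\ref{asm-res} allows sublinear growth, e.g.\ $V=x^2(1+x^2+y^2)^{-3/4}\sim|x|^{1/2}$ near $y=0$, and such a $V$ is not $P_0$-bounded. After a Fourier transform in $x$, the shift $y\mapsto y+\xi/B$ and a gauge factor, $P_0$ becomes $\frac12(D_s^2+B^2s^2)-D_\xi$ up to a constant. So the $n$-th Landau level contains states with $\|P_0u\|+\|u\|$ bounded uniformly in $n$ but localized near $x\approx-(n+\frac12)B$, where the Stark term cancels the Landau energy. On these states $\|Vu\|\to\infty$. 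Hence $D(P)\neq D(P_0)$ in general, and the graph-norm equivalence in Step~2 is false as written with $\|P_0u\|$. With $\|Pu\|$ throughout, everything goes through, since $\|\chi D^\alpha u\|\le C(\|Pu\|+\|u\|)$ for $\chi\in\Cinf_c(\RR^2)$ and $|\alpha|\le 2$. That is what your choice $\mathcal D=D(P)$ and your displayed potential part $\theta+V(x+\theta,y)-V(x,y)$ (which is that of $P_\theta-P$) actually correspond to.

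Where you genuinely differ from the paper is Step~2.
\begin{itemize}
\item \emph{The paper's route.} It proves $\|(P_\theta-P_{\theta'})u\|\le C|\theta-\theta'|\|P_\theta u\|+C\|u\|$, relative to $P_\theta$ rather than to $P$, and carries the domain along the strip by small Kato--Rellich steps as in \cite{K1}. This needs no global smallness of the distortion, only ellipticity of $P_\theta$ on the shell for each $\theta$ in the strip.
\item \emph{Your route.} You make $\sup|\theta\nabla\chi_0|$ small on the whole strip via a slowly varying $\chi_0$. Then one Kato--Rellich step from the self-adjoint $P$ suffices, and the Neumann series for $(\lambda-P_\theta)^{-1}$, $|\IM\lambda|\gg 1$, in your Step~3 is immediate. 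The paper leaves $P_\theta^*=P_{\bar\theta}$ to \cite{K1}; your argument for it is fine.
\item \emph{Ellipticity.} Your choice also makes explicit the ellipticity the paper's elliptic estimate tacitly needs. At $\eta=0$ the principal symbol of $P_\theta$ is $\frac12\xi^2(1-\theta\partial_x\chi_0)^{-2}\bigl(1+\theta^2(\partial_y\chi_0)^2\bigr)$, which vanishes when $\theta\partial_y\chi_0=\pm i$. So for a steep $\chi_0$ and large $\delta_0$, neither route covers the whole strip without a condition like $\delta_0\sup|\nabla\chi_0|<1$.
\item \emph{Your flagged obstacle.} The point you flag as the main obstacle is harmless. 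Write $a_\theta D_x+By=a_\theta(D_x+By)+(1-a_\theta)By$, and similarly for $D_y-b_\theta D_x$. The extra first-order coefficients are $\mathcal{O}(|\theta|\,|\nabla\chi_0|\,|y|)=\mathcal{O}(1)$ on a transition shell of width $L$ with $|\nabla\chi_0|\lesssim L^{-1}$. The constant in front of $\|Pu\|$ in the magnetic elliptic estimate is independent of $L$; the size of $x+V$ on the shell enters only the $\|u\|$ term.
\end{itemize}
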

\begin{proof}
We first claim that we have 
\begin{equation}\label{perturbation}
\|(P_{\theta}-P_{\theta'})u\|\le C|\theta-\theta'|\|P_{\theta}u\|+C\|u\| 
\end{equation}
for $u\in \Cinf_c(\RR^2)$. 
To prove this, we separate $(P_{\theta}-P_{\theta'})u$ into the the scalar potential term and 
the kinetic term. 
Note that the scalar potential term of $P_{\theta}-P_{\theta'}$ is given by
$(\theta-\theta')v_1(x,y)+V_{\theta}(x, y)-V_{\theta'}(x, y)$. 
This is a bounded function and thus is estimated by $C\|u\|$ when it is applied to $u$. 
The kinetic term of $P_{\theta}-P_{\theta'}$ is a compactly supported second-order differential operator with $\mathcal{O}(|\theta-\theta'|)$ 
coefficients. Thus the elliptic estimate implies that this is estimated by $C|\theta-\theta'|\|P_{\theta}u\|+C\|u\|$ 
when it is applied to $u$. Thus we proved \eqref{perturbation}. 
The equation~\eqref{perturbation} and the Kato-Rellich-type argument implies this lemma 
as in \cite{K1}. 
%as follows. 
\end{proof}

We next discuss the discreteness of the spectrum of $P_{\theta}$. 
\begin{prop}\label{prop2-2}
The spectrum of $P_{\theta}$ is discrete in $\{z\in \CC|\, \IM z>\IM \theta\}$ 
for $\IM \theta<0$ and small $|\RE \theta|$. 
\end{prop}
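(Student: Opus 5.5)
The plan is to compare $P_\theta$ with the globally complex-translated operator and then use analytic Fredholm theory. Set
\[
P^{\mathrm{glob}}_\theta=\frac12(D_x+By)^2+\frac12 D_y^2+x+\theta+V(x+\theta,y),
\]
formally conjugating $P$ by the translation $x\mapsto x+\theta$ everywhere; here $V(x+\theta,y)$ only makes sense for $|(x,y)|$ large, so I take instead a model potential. Concretely, I first study the free magnetic Stark operator
\[
P_{0,\theta}=\frac12(D_x+By)^2+\frac12 D_y^2+x+\theta .
\]
This is $P_0+\theta$, so its spectrum is $\RR+\theta$, using $\sigma(P_0)=\RR$ with $P_0$ self-adjoint. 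Hence for $\IM z>\IM\theta$ we have $\|(z-P_{0,\theta})^{-1}\|\le (\IM z-\IM\theta)^{-1}$.

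Next I write $P_\theta=P_{0,\theta}+W_\theta$. The remainder $W_\theta$ consists of two pieces.
\begin{itemize}
\item A compactly supported second-order differential operator, supported where $\chi_0\neq 1$ fails to be constant, together with the difference $-\theta\chi_0$ coming from the linear term. All of these are supported in a compact set, say $|(x,y)|\le R_1$.
\item The potential term $V_\theta(x,y)=V(x+\theta(1-\chi_0),y)$.
\end{itemize}
The problem is that $V_\theta$ is bounded but does not decay: Assumption~\ref{asm-res} only gives $\partial V\to 0$, not $V\to 0$. I would handle this by absorbing the non-decaying part into the reference operator. Let $Q_\theta=P_{0,\theta}+V(x+\theta,y)$ on the exterior, extended by any bounded cutoff. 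The key claim is that for $\IM z>\IM\theta$ the operator $z-Q_\theta$ is invertible modulo a compact error. I would obtain this from the numerical range:
\[
\IM\langle (Q_\theta-\theta) u,u\rangle=\IM\langle (V(x+\theta,y)-V(x,y))u,u\rangle+\text{compact}.
\]
By Taylor expansion in $\IM\theta$ and $\partial V\to 0$, the first term is $o(1)\|u\|^2$ outside a large ball, and the part inside the ball is compact relative to the kinetic term. So for $|(x,y)|>R$ large, $\IM\langle (P_\theta-\theta)u,u\rangle\ge -\epsilon\|u\|^2$ on functions supported there.

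I would then build a parametrix by gluing. Take $\psi_1\in\Cinf_c$ equal to $1$ on $|(x,y)|\le R$, with $\psi_1+\psi_2=1$. Inside, use the resolvent of a self-adjoint elliptic operator $P_{\mathrm{in}}$, for instance $P$ plus a large confining potential $L\,(1-\tilde\chi)$, and use it at $z=i\mu$ with $\mu$ large. Outside, use $(z-P_\theta^{\mathrm{ext}})^{-1}$, where $P_\theta^{\mathrm{ext}}$ is an operator equal to $P_\theta$ for $|(x,y)|>R/2$ whose numerical range lies in $\{\IM w\le \IM\theta+\epsilon\}$. Maximal accretivity follows from Proposition~\ref{prop2-1}-type arguments, since the adjoint has the same form. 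Then
\[
E(z)=\tilde\psi_1(i\mu-P_{\mathrm{in}})^{-1}\psi_1+\tilde\psi_2(z-P^{\mathrm{ext}}_\theta)^{-1}\psi_2
\]
satisfies $(z-P_\theta)E(z)=I+K(z)$. The operator $K(z)$ is built from commutator terms $[P,\tilde\psi_j]$, which are first order and compactly supported, composed with resolvents mapping into $H^1_{\mathrm{loc}}$-type spaces, together with the term $(z-i\mu)\tilde\psi_1(i\mu-P_{\mathrm{in}})^{-1}\psi_1$. Rellich compactness on the compact support gives that $K(z)$ is compact and analytic in $z$ on $\{\IM z>\IM\theta+\epsilon\}$.

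Finally I apply analytic Fredholm theory. For $\IM z$ large, $\|K(z)\|<1$: the exterior commutator terms are small like $(\IM z)^{-1/2}$, using the $H^1$ bound from the numerical range, and the interior term is small by choosing $\mu\sim \IM z$ (adjusting the construction, or taking $z=i\mu$ directly). Hence $I+K(z)$ is invertible on $\{\IM z>\IM\theta+\epsilon\}$ except at a discrete set, which gives a meromorphic right inverse. A symmetric left parametrix, or applying the argument to $P_\theta^*=P_{\bar\theta}$ from Proposition~\ref{prop2-1}, shows that the poles are eigenvalues of finite multiplicity. Letting $\epsilon\to0$ proves the claim.

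The main obstacle is the exterior step: establishing the numerical-range and $H^1$ bounds for $P^{\mathrm{ext}}_\theta$ with the non-decaying, merely analytic potential. This is where Assumption~\ref{asm-res}(ii) must be used carefully, via $\IM V(x+\theta,y)=\IM\theta\,\partial_xV+\mathcal O(|\IM\theta|^2\sup|\partial^2_xV|)$, with Cauchy estimates on the strip controlling $\partial_x^2 V$ by $\sup|\partial V|$ on a slightly larger strip.
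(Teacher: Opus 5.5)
\paragraph{The gap: the transition annulus.} Your construction breaks down on the transition annulus $\supp\nabla\chi_0$ of the exterior translation, and that is exactly where the difficulty of this proposition lies.

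There the kinetic part of $P_\theta$ is itself distorted. With $J=1-\theta\,\partial_x\chi_0$, its symbol is $\frac12(\xi/J+By)^2+\frac12(\eta+\theta\,\partial_y\chi_0\,\xi/J)^2$. To leading order in $\theta$, its imaginary part is $\IM\theta\,\xi\bigl(\partial_x\chi_0\,(\xi+By)+\partial_y\chi_0\,\eta\bigr)$. This is indefinite, and where $\partial_x\chi_0<0$ it is positive and unbounded in $\xi$. Hence no operator equal to $P_\theta$ on the annulus can have numerical range in $\{\IM w\le\IM\theta+\epsilon\}$. So in your scheme the annulus must sit inside $\{\psi_1=1\}$.

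But your $P_{\mathrm{in}}$ is self-adjoint and equals $P$ on $\supp\tilde\psi_1$. So $(z-P_\theta)E(z)$ contains an extra term you did not list (you write $[P,\tilde\psi_j]$ as if $P_\theta=P$ there): $\tilde\psi_1(P-P_\theta)(i\mu-P_{\mathrm{in}})^{-1}\psi_1$. Its second-order part composed with the resolvent is a zeroth-order operator whose principal symbol does not vanish at fibre infinity over the annulus. It is therefore bounded but neither compact nor small as $\mu\to\infty$. Its size is of order $|\theta|\,\|\nabla\chi_0\|_\infty$, and $|\IM\theta|$ may be close to $\delta_0$. So $K(z)$ is not compact and the analytic Fredholm step fails. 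This is the obstruction the paper identifies: with exterior distortion, the second-order part of $P_\theta$ has no sign-definite imaginary part.

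\paragraph{The missing idea: use the real part.} When $|\IM\theta|\,|\nabla\chi_0|<1$ and $|\RE\theta|$ is small, the real part of the distorted kinetic symbol is still positive definite. Adding a large real potential therefore makes the operator coercive on the annulus. The paper does this with $P_{\theta,R}=P_\theta+R\chi(x/R,y/R)$:
\begin{itemize}
\item On $\supp\chi_1^R$, which contains the annulus, it uses $\RE(\chi_1^Ru,(P_{\theta,R}-z)\chi_1^Ru)\gtrsim\|\chi_1^Ru\|^2$.
\item Outside, it uses $-\IM$ coming from $\IM(\theta v_1)=\IM\theta$.
\item The gluing commutator $[P_{\theta,R},\chi_1^R]$ is made small for large $R$ by the uniform magnetic estimate of Lemma~\ref{lem-elliptic}.
\end{itemize}
Compactness then enters only through $R\chi^R(P_{\theta,R}-z)^{-1}$, and invertibility of $P_\theta+R-z$ supplies the regular point.

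You could repair your parametrix the same way. Take $P_{\mathrm{in}}$ equal to $P_\theta$ near the annulus plus a large real potential, and prove its invertibility by such a real-part estimate. As written, that substantive step is missing.

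\paragraph{The step you flag is easy.} Beyond $\supp\chi_0$ one has $|\IM V(x+\theta,y)|\le|\IM\theta|\sup|\partial_xV|$, the supremum taken along the segment from $x+\RE\theta$ to $x+\theta$. This tends to $0$ directly by Assumption~\ref{asm-res}(ii), with no second derivatives or Cauchy estimates needed. Note also that $V_\theta$ need not be bounded, only sublinear; this is harmless, since only $\IM V_\theta$ enters.
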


To prove Proposition~\ref{prop2-2}, we first modify $P_{\theta}$ on a compact set as follows.
Take $\chi\in \Cinf_c(\RR^2)$ such that $\chi=1$ near the origin.  
We take $R>0$ and set 
\[
P_{\theta, R}=P_{\theta}+R\chi(x/R, y/R).
\]

\begin{lemma}\label{lem-inv}
For any complex number $z$ with $\IM z>\IM \theta$, there exists $R>0$ such that 
$z \in \CC\setminus \sigma(P_{\theta, R})$ and $z \in \CC\setminus \sigma(P_{\theta}+R)$.
\end{lemma}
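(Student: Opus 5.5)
The plan is to prove both invertibility statements with an a priori estimate of the form $\|(P_{\theta,R}-z)u\|\ge c\|u\|$ for $u\in\Cinf_c(\RR^2)$, and the same estimate for the adjoint. I will do this in detail for $P_{\theta,R}$; the operator $P_\theta+R$ is simpler, since the added term $R$ is a constant. The estimate will come from the imaginary part of the quadratic form, plus a partition of unity. I split $\RR^2$ into three pieces: the region $\{x^2+y^2<(R_0+1)^2\}$ where $\chi_0=1$, the exterior region where $1-\chi_0=1$, and the transition annulus.

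\emph{Exterior region.} On the set where $1-\chi_0=1$, $P_\theta$ is exactly $P$ with $x$ replaced by $x+\theta$. This gives
\[
P_\theta=\tfrac12(D_x+By)^2+\tfrac12 D_y^2+x+\theta+V(x+\theta,y).
\]
For $u$ supported there,
\[
\IM\langle (P_\theta-z)u,u\rangle=\bigl(\IM\theta-\IM z\bigr)\|u\|^2+\langle \IM V(x+\theta,y)u,u\rangle .
\]
By Assumption~\ref{asm-res}(ii), $|\IM V(x+\theta,y)|\le|\IM\theta|\sup|\partial_x V|$, which is $o(1)$ as $|(x,y)|\to\infty$. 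So for $|(x,y)|\ge R_1$ with $R_1$ large, we get $-\IM\langle(P_\theta-z)u,u\rangle\ge \tfrac12(\IM z-\IM\theta)\|u\|^2$. Taking $R\gg R_1$ and $\chi=1$ near the origin, the term $R\chi(\cdot/R)$ does not affect this region.

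\emph{Compact region.} On $|(x,y)|\le R_1$ (which contains the annulus), the coefficients of $P_\theta$ are bounded and the principal part is uniformly elliptic for small $|\theta|$. The real part of the form is then at least $\tfrac14\|\nabla u\|^2+R\|u\|^2-C\|u\|^2-C|\theta|\|\nabla u\|^2$, because $R\chi(\cdot/R)=R$ there. For $R$ large this gives $\RE\langle(P_{\theta,R}-z)u,u\rangle\ge (R/2)\|u\|^2$.

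\emph{Gluing.} I will use an IMS-type partition $\psi_1^2+\psi_2^2=1$ with $\psi_1$ supported in $|(x,y)|<2R_1$ and $\psi_2$ supported outside $R_1$. The commutators $[P_\theta,\psi_j]$ are first order with bounded, compactly supported coefficients of size $O(1)$. They are controlled by $\epsilon\|\nabla\psi_1u\|^2+C_\epsilon\|u\|^2$-type terms, and these are absorbed by the $R$-gain. The main obstacle is that this $O(1)$ commutator error must be absorbed only by the exterior gain $\IM z-\IM\theta$, which is fixed. My first attempt to handle this is to spread the transition over a long scale, taking $\psi_j(\cdot/L)$ with $L$ large, which makes the commutators $O(1/L)$. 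I also keep $R\gg L$ so that the $R$-gain covers the whole support of $\psi_1$.

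\emph{Conclusion.} The pieces give $\|(P_{\theta,R}-z)u\|\ge c\|u\|$ on the core $\Cinf_c(\RR^2)$, and this extends to the closure. The same argument applies to $P_{\theta,R}^*-\bar z=P_{\bar\theta,R}-\bar z$, using Proposition~\ref{prop2-1}. There the sign of the imaginary part is reversed, consistently, since $\IM\bar z-\IM\bar\theta<0$. Together, the two bounds give injectivity with closed range and dense range, so $z\notin\sigma(P_{\theta,R})$. For $P_\theta+R$ the identical argument works, with the constant $R$ giving the gain everywhere on the compact region.
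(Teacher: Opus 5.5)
Your argument is correct in outline, and its architecture is the paper's. Both use real-part coercivity on a large ball from $R\chi(\cdot/R)$, coercivity of $-\IM$ outside from the translated Stark term $x+\theta$, a cutoff whose transition is at a large scale, and the adjoint $P_{\bar\theta,R}-\bar z$ for dense range. The paper's $\chi_1^R$, with $\chi=1$ on $\supp\chi_1$, is your $\psi_1(\cdot/L)$ with $L=cR$, $c$ small.

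The real difference is how you control the commutator terms. The paper glues in norm, $\|u\|\lesssim\|(P_{\theta,R}-z)u\|+\|[P_{\theta,R},\chi_1^R]u\|$. It then bounds the magnetic gradient on the transition region by Lemma~\ref{lem-elliptic}, which rests on a pseudodifferential parametrix for the auxiliary elliptic operator $A_R=P_{\theta,R}-z+2w(x/R)$ (Lemma~\ref{lem-symbol} and Beals' theorem). You can avoid all of that, but your sketch does not yet say how. As written, bounding first-order commutators by $\epsilon\|\nabla\psi_1u\|^2+C_\epsilon\|u\|^2$ is not enough, because $[P_\theta,\psi_1]$ involves $\nabla u$ also near the outer edge of the annulus, where $\psi_1$ is small.

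The clean fix is the exact IMS identity $P_{\theta,R}=\sum_j\psi_jP_{\theta,R}\psi_j-\frac12\sum_j|\nabla\psi_j|^2$. It holds for this non-self-adjoint operator because the annulus lies where the kinetic part is undistorted. It gives $-\IM\langle(P_{\theta,R}-z)u,u\rangle=-\sum_j\IM\langle(P_{\theta,R}-z)\psi_ju,\psi_ju\rangle$ with no error. Since $\psi_1=1$ near the distortion region, it also gives $\RE\langle(P_{\theta,R}-z)\psi_1u,\psi_1u\rangle=\RE\langle\psi_1^2(P_{\theta,R}-z)u,u\rangle+\frac12\||\nabla\psi_1|u\|^2$.

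Set $N(v)=\|(D_x+By)v\|^2+\|D_yv\|^2$. For $R\gg L$ the left side of the last identity is at least $\frac R2\|\psi_1u\|^2+cN(\psi_1u)$. Also $|\IM\langle(P_{\theta,R}-z)\psi_1u,\psi_1u\rangle|\le C(\|\psi_1u\|^2+N(\psi_1u))$ with $C$ independent of $R$ and $L$. Combining these with your exterior bound for $\psi_2u$ gives $\|u\|^2\le C\|(P_{\theta,R}-z)u\|\|u\|+CL^{-2}\|u\|^2$, and taking $L$ large closes the argument.

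So your route is elementary (quadratic forms only, no symbol calculus). The paper's Lemma~\ref{lem-elliptic} is a genuinely elliptic magnetic-Sobolev bound, improvable per the remark, and it costs the pseudodifferential machinery.

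Two minor corrections. First, $R\chi(\cdot/R)$ is not absent from $\{|(x,y)|\ge R_1\}$; it equals $R$ on most of that region, including the annulus where you glue. It is harmless there only because it is real. Second, your interior coercivity assumes $|\theta|$ small, but the lemma is needed for all $-\delta_0<\IM\theta<0$. The paper asserts the same positivity of the real part of the distorted kinetic term, and it holds once $|\theta\nabla\chi_0|$ is small, e.g.\ by letting $\chi_0$ vary slowly.
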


\begin{proof}[Proof of Proposition~\ref{prop2-2} assuming Lemma~\ref{lem-inv}]
Take $z$ and $R$ as in Lemma~\ref{lem-inv}. We write $\chi^R(x, y)=\chi(x/R, y/R)$ and 
take $\tilde{\chi}\in \Cinf(\RR^2)$ such that 
$\tilde{\chi}=1$ on $\supp \chi^R$.  
Then the elliptic estimate implies 
\[
\| \Delta \chi^R u\|\lesssim \|\tilde{\chi}(P_{\theta, R}-z)u\|+\|\tilde{\chi} u\|
\]
for $u\in D(P_{\theta, R})$. This implies 
\[
\|\Delta\chi^M (P_{\theta, R}-z)^{-1} u\|\lesssim \|\tilde{\chi}u\|+\|\tilde{\chi} (P_{\theta, R}-z)^{-1}u\|
\]
for $u\in L^2(\RR^2)$. 
Since $(P_{\theta, R}-z)^{-1}$ is a bounded operator by Lemma~\ref{lem-inv}, the Rellich theorem implies that 
$\chi^R (P_{\theta, R}-z)^{-1}$ is a compact operator. 

We note that 
\[
P_{\theta}-z=(1-R\chi^R(P_{\theta, R}-z)^{-1})(P_{\theta, R}-z). 
\]
Since $\chi^R (P_{\theta, R}-z)^{-1}$ is a compact operator, this is a Fredholm operator with index zero. 
Then $P_{\theta}-z$, $\IM z>\IM \theta$ is an analytic family of Fredholm operators with index zero. 
This and the second claim in Lemma~\ref{lem-inv} imply that $(P_{\theta}-z)^{-1}$ is meromorphic, which completes the proof. 
\end{proof} 

In the case of the complex translation on the whole plane, 
$z \in \CC\setminus \sigma(P_{\theta}-i\chi^R)$ and $z \in \CC\setminus \sigma(P_{\theta}-iR)$ are easily proved 
by considering $\IM (u, (P_{\theta}-i\chi^R-z)u)$ and $\IM (u, (P_{\theta}-iR-z)u)$. 
Thus the proof of Proposition~\ref{prop2-2} is easy in this case. 
Since we use the exterior complex translation, the negative part of the second-order part of $P_{\theta}$ is not 
negative semi-definite. The proof of Lemma~\ref{lem-inv} does not seem easy and postponed to the next subsection.

We finally characterize resonances of $P$ as discrete eigenvalues of $P_{\theta}$. 
\begin{prop}\label{prop2-3}
The resonances of $P$ in $\{z\in \CC|\, \IM z>\IM \theta\}$ 
coincides with discrete eigenvalues of $P_{\theta}$ in that region including multiplicities 
for $-\delta_0 <\IM \theta<0$ and small $|\RE \theta|$. 
\end{prop}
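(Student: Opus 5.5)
The plan is the standard Aguilar--Balslev--Combes / Hunziker argument adapted to exterior complex translation, using analyticity in $\theta$ of cutoff resolvents. The key observation is that $U_\theta$ acts as the identity on functions supported in $\{x^2+y^2<(R_0+1)^2\}$, since $\chi_0=1$ there. Hence, for $\chi_1,\chi_2\in\Cinf_c$ supported in this ball, $\chi_1 U_\theta = \chi_1$ and $U_\theta^{-1}\chi_2=\chi_2$ for real $\theta$. For real small $\theta$ and $\IM z>0$ this gives
\[
\chi_1 (z-P)^{-1}\chi_2=\chi_1 U_\theta (z-P)^{-1}U_\theta^{-1}\chi_2=\chi_1 (z-P_\theta)^{-1}\chi_2 .
\]

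\textbf{Step 1 (continuation in $\theta$).} By Proposition~\ref{prop2-1}, $P_\theta$ is an analytic family of type (A). So, for fixed $z$ with $\IM z$ large, the map $\theta\mapsto \chi_1(z-P_\theta)^{-1}\chi_2$ is analytic wherever $z\notin\sigma(P_\theta)$. We need $z\in\rho(P_\theta)$ for all $\theta$ in a connected complex neighbourhood of the real segment. I would get this from a numerical range estimate: $P_\theta-P_{\RE\theta}$ is relatively bounded with small bound plus a bounded term, as in \eqref{perturbation}, so $z\in\rho(P_\theta)$ for $\IM z\ge C$. The right-hand side is therefore analytic in $\theta$ and constant for real $\theta$, hence constant for all $\theta$ with $|\IM\theta|<\delta_0$ and small $|\RE\theta|$.

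\textbf{Step 2 (continuation in $z$).} By Proposition~\ref{prop2-2}, $z\mapsto\chi_1(z-P_\theta)^{-1}\chi_2$ is meromorphic in $\{\IM z>\IM\theta\}$. It agrees with $\chi_1 R_+(z)\chi_2$ for $\IM z\ge C$, and hence it is the meromorphic continuation of the cutoff resolvent. Letting $\IM\theta\downarrow-\delta_0$ gives the continuation to $\{\IM z>-\delta_0\}$. Compatibility between different $\theta$ follows from uniqueness of continuation.

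\textbf{Step 3 (extending to general cutoffs).} For general $\chi_1,\chi_2\in\Cinf_c$, I would choose $\chi_0$ with $\chi_0=1$ on a larger ball containing their supports. The distortions for different $\chi_0$ give operators whose discrete spectra agree. This holds because, by Step 1 applied to test functions supported in the common region where both distortions are trivial, their cutoff resolvents coincide. It also uses the fact that poles are detected by such cutoffs, as shown below.

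\textbf{Step 4 (poles and multiplicities).} Let $\Pi=\frac{1}{2\pi i}\oint (\zeta-P_\theta)^{-1}d\zeta$ be the Riesz projection at $z$. Then $m_z=\mathrm{rank}\,\chi_1\Pi\chi_2$, and we need $\mathrm{rank}\,\chi_1\Pi\chi_2=\mathrm{rank}\,\Pi$. The inequality $\le$ is clear. For $\ge$, the plan is to show that the maps $\mathrm{Ran}\,\Pi\ni u\mapsto \chi_1 u$ and $\mathrm{Ran}\,\Pi^*\ni v\mapsto \bar\chi_2 v$ are injective. Here $\Pi^*$ is the Riesz projection of $P_{\bar\theta}$, by Proposition~\ref{prop2-1}. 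Given injectivity, $\chi_1\Pi\chi_2=(\chi_1|_{\mathrm{Ran}\Pi})\,\Pi\,(\chi_2\cdot)$ has full rank, since $\Pi\chi_2$ maps onto $\mathrm{Ran}\,\Pi$ by duality with the injectivity for $\Pi^*$.

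Injectivity means that a generalized eigenfunction of $P_\theta$ vanishing on an open set vanishes identically. Each $u\in\mathrm{Ran}\,\Pi$ satisfies $(P_\theta-z)^N u=0$, and $P_\theta$ is a second-order elliptic operator with smooth coefficients and principal part close to the Laplacian. So I would invoke the unique continuation property for elliptic operators (Aronszajn--Cordes type), applied inductively to $(P_\theta-z)^{k}u$. I expect this to be the main obstacle. For complex $\theta$ the principal symbol is non-real on the distortion region, so I would check that it is a small perturbation of the Laplacian with Lipschitz coefficients, where Carleman-estimate unique continuation still holds. Alternatively, I would use the fact that $\chi_1$ can be taken nonvanishing on a region where $P_\theta=P$, then propagate by unique continuation for the undistorted magnetic operator through the interior.
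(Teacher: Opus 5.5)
Your proposal is correct and follows the paper's own argument: the identity $\chi_1(z-P)^{-1}\chi_2=\chi_1(z-P_\theta)^{-1}\chi_2$ for real $\theta$, analytic continuation in $\theta$ by Proposition~\ref{prop2-1}, meromorphic continuation in $z$ by Proposition~\ref{prop2-2}, and $\mathrm{rank}\,\chi_1\Pi\chi_2=\mathrm{rank}\,\Pi$ by unique continuation. You also spell out details the paper leaves implicit: the duality argument through $\Pi^*$, and a common resolvent point for large $\IM z$, which may need to be propagated through finitely many small steps in $\theta$ because the constant $C|\IM\theta|$ in \eqref{perturbation} need not be small; your ``alternative'' unique-continuation route via the undistorted operator would not by itself cross the annulus where $\chi_0$ varies and the principal coefficients of $P_\theta$ are complex, so the Calder\'on-type uniqueness you propose first is the one to use.
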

\begin{proof}
Once we established Proposition~\ref{prop2-1} and Proposition~\ref{prop2-2}, 
the Proposition~\ref{prop2-3} follows from well-known arguments in resonance theory as follows.  
Take any nonzero $\chi_1, \chi_2\in \Cinf_c(\RR^2)$. 
We then distort $P$ outside the support of $\chi_1, \chi_2$ and construct $P_{\theta}$. 
Then 
\begin{equation}\label{cutoff-res}
\chi_1 R_{+}(z)\chi_2=\chi_1 (z-P_{\theta})^{-1}\chi_2
\end{equation} 
for $\IM z>0$ and real $\theta$ with small $|\theta|$. 
By Proposition~\ref{prop2-1}, the equation\eqref{cutoff-res} is true for 
$-\delta_0<\IM \theta<0$ and small $|\RE \theta|$ by the analytic continuation. 
By Proposition~\ref{prop2-2}, the left hand side of the equation~\eqref{cutoff-res} has a meromorphic continuation 
to $\{z\in \CC|\, \IM z>\IM \theta\}$ and the equation~\eqref{cutoff-res} is true for $\IM z>\IM \theta$. 
Then the multiplicity $m_z$ of resonance $z$ is given by 
\[
m_z=\mathrm{rank}\frac{1}{2\pi i}\chi_1 \oint_{|\zeta-z|=c} (z-P_{\theta})^{-1}  d\zeta \chi_2
=\mathrm{rank}\frac{1}{2\pi i}\oint_{|\zeta-z|=c} (z-P_{\theta})^{-1} d\zeta, 
\]
where the second equation follows from the unique continuation principle. 
Thus the multiplicity $m_z$ of resonance $z$ coincides with the algebraic multiplicity of $z$ as an eigenvalue of $P_{\theta}$.  
This in particular implies that $m_z$ is independent of $\chi_1, \chi_2$. 
\end{proof}

\subsection{Proof of Lemma~\ref{lem-inv}}
We use the following lemma to prove Lemma~\ref{lem-inv}. 
Take any $\chi_2 \in \Cinf_c(\RR^2)$ and set $\chi_2^R(x, y)=\chi_2(x/R, y/R)$. 

\begin{lemma}\label{lem-elliptic}
There exist $C>0$ and $R_0>0$ such that 
\begin{equation}\label{eq-1}
\|\chi_2^R (D_x+By)u\|+\|\chi_2^R D_y u\|+\|\chi_2^R u\|\le C\|(P_{\theta, R}-z)u\|+CR^{-1}\|u\|
\end{equation}
for $u\in \Cinf_c(\RR^2)$ and $R>R_0$. 
\end{lemma}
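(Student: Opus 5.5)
The plan is a localized energy (IMS-type) identity, in which the large added potential $R\chi(x/R,y/R)$ dominates everything on $\supp\chi_2^R$. I will assume, as the intended choice, that $\supp\chi_2$ lies in the interior of $\{\chi=1\}$ and inside $\{|x|<1/2\}$. Then on $\supp\chi_2^R$ the potential of $P_{\theta,R}$ is $R+x+V_\theta(x,y)+\theta v_1$, and its real part is at least $R/2-C$ there. Here I use that $V_\theta$ is bounded in the translated region, by Assumption~\ref{asm-res}(ii),(iii), and is smooth on the compact set $\{x^2+y^2\le (R_0+1)^2\}$. For $R$ large this is at least $R/4+|z|+C$.

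Set $f=(P_{\theta,R}-z)u$ and pair it with $(\chi_2^R)^2u$. Write $P_\theta=K_\theta+W_\theta$, where $W_\theta$ is the potential. Outside the compact set $\supp\nabla\chi_0$, the map $\Phi_\theta$ is a pure translation in $x$. There $K_\theta$ coincides with the undistorted magnetic kinetic operator $\frac12(D_x+By)^2+\frac12 D_y^2$, because $D_x+By$ is translation invariant in $x$. For that self-adjoint part, the IMS localization formula gives
\[
\RE\bigl((\chi_2^R)^2u,K u\bigr)=\tfrac12\|\chi_2^R(D_x+By)u\|^2+\tfrac12\|\chi_2^R D_yu\|^2-\tfrac12\|(\nabla\chi_2^R)u\|^2,
\]
and $\|(\nabla\chi_2^R)u\|^2\le CR^{-2}\|u\|^2$. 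On the compact set $\supp\nabla\chi_0$, $K_\theta$ differs from $K$ by a second-order operator with coefficients $\mathcal O(|\theta|\,|\partial\chi_0|)$, plus lower-order terms with bounded, compactly supported coefficients. I will take $\chi_0$ slowly varying, so that $\sup|\theta\,\partial\chi_0|$ is small for $|\IM\theta|<\delta_0$. Then the real part of the principal symbol of $K_\theta$ stays uniformly elliptic with constant $c>0$. The first-order error terms are bounded by $\epsilon\|\chi_2^R\nabla u\|^2+C_\epsilon\|\chi_2^R u\|^2$ on that compact set, and both pieces are absorbed: the gradient piece by the kinetic term, and the $L^2$ piece by the potential term, since $R$ is large.

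Combining these gives
\[
c\bigl(\|\chi_2^R(D_x+By)u\|^2+\|\chi_2^R D_yu\|^2\bigr)+\tfrac R4\|\chi_2^Ru\|^2\le \|\chi_2^Ru\|\,\|f\|+CR^{-2}\|u\|^2 .
\]
Here I used $|((\chi_2^R)^2u,f)|\le\|\chi_2^Ru\|\|f\|$, since $0\le\chi_2\le1$ may be assumed. The imaginary part of the potential is bounded, so it contributes only $C\|\chi_2^Ru\|^2$, which is absorbed again. Using $\|\chi_2^Ru\|\|f\|\le \frac R8\|\chi_2^Ru\|^2+\frac2R\|f\|^2$ then yields
\[
\|\chi_2^R(D_x+By)u\|^2+\|\chi_2^R D_yu\|^2+\|\chi_2^Ru\|^2\le C\|f\|^2+CR^{-2}\|u\|^2 .
\]
Taking square roots gives \eqref{eq-1}.

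The main obstacle is the region where the exterior translation is switched on. There $K_\theta$ is genuinely non-self-adjoint, with complex, non-small $\IM\theta$, so the clean IMS identity fails. The plan relies on choosing $\chi_0$ with small gradient, or equivalently replacing $\chi_0$ by $\chi_0(\cdot/L)$ with $L$ large, so that real-part ellipticity survives. If that rescaling is not allowed, the fallback is a G\aa rding-type estimate on that fixed compact set. Its $L^2$ error $C\|\chi_2^Ru\|^2$ is harmless because it is dominated by the $R\|\chi_2^Ru\|^2$ term.
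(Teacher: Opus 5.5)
Your argument is correct, but it takes a different route from the paper.

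\textbf{The paper's route.} The paper argues through a parametrix. It adds $2w(x/R)$ to obtain a globally elliptic operator $A_R=P_{\theta,R}-z+2w(x/R)$, where the extra term tames the Stark potential as $x\to-\infty$. In Lemma~\ref{lem-symbol} it proves that $A_R^{-1}\in\mathrm{Op}S(m^{-2})$ uniformly in $R$, using the symbol bounds $|m^k/a|\lesssim R^{-1+k/2}$, the composition calculus, a Neumann series and Beals's theorem. It then reduces to small $\supp\chi_2$ so that $A_R=P_{\theta,R}-z$ on $\supp\chi_2^R$. Finally it estimates $\|\chi_2^Ru\|_{H^1_{A,h}}\lesssim\|A_R\chi_2^Ru\|_{H^{-1}_{A,h}}$ and bounds the commutator by $\mathcal{O}(R^{-1})$ from $L^2$ to $H^{-1}_{A,h}$.

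\textbf{Your route.} You take the real part of $((\chi_2^R)^2u,(P_{\theta,R}-z)u)$ and use coercivity directly. The ingredients are IMS localization for the undistorted magnetic Laplacian, absorption of $K_\theta-K$ on the fixed compact set $\supp\nabla\chi_0$, and the bound $\RE(\text{potential})\ge R/4$ on $\supp\chi_2^R$. Because you localize from the start, no global auxiliary term is needed.

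\textbf{What each buys.} Your route is elementary, with no symbol classes and no Beals. It also yields $\|\chi_2^R(D_x+By)u\|+\|\chi_2^RD_yu\|\le CR^{-1/2}\|(P_{\theta,R}-z)u\|+CR^{-1}\|u\|$, which is the improvement mentioned in the paper's remark. The paper's route instead gives a uniform-in-$R$ parametrix, and it controls $\chi_2^Ru$ in $H^1_{A,h}$ by an $H^{-1}_{A,h}$ norm of the localized equation.

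\textbf{Shared assumptions.} Your restriction $\supp\chi_2\subset\{\chi=1\}\cap\{|x|<1/2\}$ is the same small-support reduction the paper makes. Both proofs rest on positivity of the real part of the distorted kinetic symbol, which requires $|\theta|\sup|\nabla\chi_0|$ to be small. You make this explicit through a slowly varying $\chi_0$; the paper attributes it to $|\RE\theta|\ll1$. Drop your G\r{a}rding fallback, because it does not cover the case without slow variation. For $\theta=-i\tau$, the distorted principal symbol vanishes at $\eta=0$, $\xi\neq0$ wherever $\tau|\partial_y\chi_0|=1$, so ellipticity itself can fail there.

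\textbf{Two small slips.} Neither affects the conclusion.
\begin{enumerate}
\item Your displayed IMS identity is not exact. The correct form is $\RE((\chi_2^R)^2u,Ku)=\frac12\|(D_x+By)(\chi_2^Ru)\|^2+\frac12\|D_y(\chi_2^Ru)\|^2-\frac12\|(\nabla\chi_2^R)u\|^2$. With $\chi_2^R$ placed outside the derivatives, as you wrote it, there is an extra term $-\frac12\int\chi_2^R\Delta\chi_2^R\,|u|^2$. This is again $\mathcal{O}(R^{-2})\|u\|^2$.
\item Assumption~\ref{asm-res} does not make $V_\theta$ bounded, since only $\partial V\to0$ is assumed and $V$ may grow sublinearly. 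What (ii) and (iii) give is $|V(x,y)|\le C_\epsilon+\epsilon|x|$ together with boundedness of $\IM V_\theta$. That still yields $\RE(\text{potential})\ge R/4$ on $\supp\chi_2^R$ for $R$ large. Note also that the imaginary part of the potential does not enter the real part of your pairing at all.
\end{enumerate}
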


\begin{remark}
In fact, the right hand side in \eqref{eq-1} is improved to $CR^{-1/2}\|(P_{\theta, R}-z)u\|$ by some 
modifications of the proof. 
\end{remark}

\begin{proof}[Proof of Lemma~\ref{lem-inv} assuming Lemma~\ref{lem-elliptic}]
We need to show that $(P_{\theta, R}-z)^{-1}$ is a bounded operator on $L^2(\RR^2)$ for $R\gg 1$. 
The same proof with $R\chi^R$ replaced by $R$ shows that $(P_{\theta}+R-z)^{-1}$ is a bounded operator on $L^2(\RR^2)$ for $R\gg 1$. 
We take $\chi_1\in \Cinf_c(\RR^2)$ such that $\chi_1(x, y)=1$ near $(0, 0)$ and that 
$\supp \chi_1$ is sufficiently small so that $\chi=1$ on $\supp \chi_1$.  
We also set $\chi_1^R(x, y)=\chi_1(x/R, y/R)$. 

We estimate $\|(P_{\theta, R}-z)u\|$ from below for $u\in \Cinf_c(\RR^2)$.  
We see that 
\[
\RE(\chi_1^R u, (P_{\theta, R}-z)\chi_1^R u)\gtrsim\|\chi_1^R u\|^2. 
\]
This follows from the elliptic estimate since the second order part of $P_{\theta}$ is positive definite since $|\RE \theta|\ll 1$ 
and $R\chi^R(x)>R$ when $\chi_1^R(x)\not=0$. 

We also see that 
\[
-\IM((1-\chi_1^R) u, (P_{\theta, R}-z)(1-\chi_1^R) u)\gtrsim\|(1-\chi_1^R) u\|^2. 
\]
This follows from the $\theta v_1(x, y)$ term of the complex distortion of the Stark potential 
since $\IM\theta v(x, y)=\IM \theta$ when $(1-\chi_1^R(x))\not=0$. 

We then have
\begin{align*}
\|u\|&\le  \|\chi_1^R u\|+\|(1-\chi_1^R) u\|\\
     & \lesssim \|(P_{\theta, R}-z)\chi_1^R u\|+\|(P_{\theta, R}-z)(1-\chi_1^R) u\|\\ 
	 &\lesssim \|(P_{\theta, R}-z) u\|+\|[P_{\theta, R}, \chi_1^R] u\|. 
\end{align*}
Lemma~\ref{lem-elliptic} implies 
\[
\|[P_{\theta, R}, \chi_1^R] u\| \lesssim R^{-1}\|(P_{\theta, R}-z)u\|+R^{-2}\|u\|. 
\]
We conclude that 
\[
\|u\|\le C\|(P_{\theta, R}-z)u\|
\]
for $R\gg 1$. By approximation, this is valid for all $u$ in the domain of $P_{\theta, R}$. 
Since we have the same estimate for the adjoint operator, we see that 
$\|(P_{\theta, R}-z)^{-1}\|_{L^2\to L^2}\le C$ for $R\gg 1$.

\end{proof}

It remains to prove Lemma~\ref{lem-elliptic}. 
For that, we use the theory of pseudodifferential operators (see~Zworski~\cite{Z}). 
we take $w \in  \Cinf(\RR^2; \RR_{\ge 1})$ depending only on $x$ and 
$w=|x|$ for $x \le -2$ and $w=1$ for $x \ge -1$. 
We fix $\theta$ with $-\delta_0<\IM \theta<0$ and $|\RE \theta|$ small. 
We also fix $z\in \CC$ with $\IM z>\IM \theta$. 
We introduce an auxiliary operator
\[
A_R=P_{\theta, R}-z+2w(x/R).
\]
We define the order function $m$ by 
\[
m(x, y, \xi, \eta)=((\xi+By)^2+\eta^2+1)^{1/2}.
\]

\begin{lemma}\label{lem-symbol}
There exists $R_1>0$ such that $A_R^{-1}\in \mathrm{Op}S(m^{-2})$ uniformly for $R>R_1$. 
Namely, each seminorm of the symbol of $A_R^{-1}$ in $S(m^{-2})$ is bounded for $R>R_1$. 
\end{lemma}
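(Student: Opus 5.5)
The plan is to use Beals' characterization of pseudodifferential operators rather than a direct symbolic parametrix. A symbolic parametrix is awkward here because the Stark term $x$ is not dominated by $m^2$, so $a_R$ is not elliptic in $S(m^2)$ in the naive sense. Beals' criterion, in the form adapted to an order function as in Zworski~\cite{Z}, says roughly the following. An operator $T$ belongs to $\mathrm{Op}S(m^{-2})$ if, for every $N$ and all linear forms $\ell_1,\dots,\ell_N$ in $(x,y,\xi,\eta)$, the iterated commutators $\mathrm{ad}_{\ell_1^w}\cdots\mathrm{ad}_{\ell_N^w}T$ map $L^2$ into the weighted space $H_m^2:=\{u:\ m^w u\in L^2\}$, with $m^w$ the Weyl quantization of the order function $m$ (the weight here being $m^2$). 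The symbol seminorms are then controlled by the corresponding operator norms. So it suffices to bound all these commutators uniformly in $R>R_1$ for $T=A_R^{-1}$.

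\textbf{Step 1 (commutators).} Write $A_R=a_R^w$. I will check that for every linear form $\ell$ the commutator $[A_R,\ell^w]$ lies in $\mathrm{Op}S(m)$ uniformly in $R$, and that iterated commutators lie in $\mathrm{Op}S(1)$. The reasons are term by term:
\begin{itemize}
\item the kinetic part is quadratic in $(\xi+By,\eta)$, so one commutator lowers it to order $m$;
\item the distortion of the kinetic part is compactly supported and independent of $R$;
\item the symbol $x+\theta v_1$ has constant or bounded derivatives;
\item $\partial V_\theta$ is bounded by Assumption~\ref{asm-res};
\item $\partial\bigl(R\chi(x/R,y/R)\bigr)=(\partial\chi)(x/R,y/R)$ and $\partial\bigl(2w(x/R)\bigr)=\mathcal{O}(R^{-1})$ are bounded uniformly in $R$, with all higher derivatives even smaller.
\end{itemize}
Then the identity $[\ell^w,A_R^{-1}]=A_R^{-1}[A_R,\ell^w]A_R^{-1}$ and its iterates reduce everything to the mapping properties of $A_R^{-1}$.

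\textbf{Step 2 (key estimate).} The required mapping property is the uniform bound
\[
\|u\|_{H_m^2}\simeq\|(D_x+By)^2u\|+\|D_y^2u\|+\|u\|\le C\|A_Ru\|,\qquad u\in\Cinf_c(\RR^2),\ R>R_1,
\]
together with the same bound for $A_R^*$. I expect this to be the main obstacle.
\begin{itemize}
\item \emph{$L^2$ invertibility.} For the lower bound on $\|u\|$, I would split $u=\chi_1^Ru+(1-\chi_1^R)u$ as in the proof of Lemma~\ref{lem-inv}. On $\supp\chi_1^R$ the real part of $a_R$ dominates because $R\chi^R\ge R$ there. Off $\supp\chi_1^R$, $-\IM a_R\ge \IM z-\IM\theta>0$ once $R\ge R_0+1$, since then $v_1=1$ there. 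The term $2w(x/R)$ is real, nonnegative and of slowly varying type, so it does not spoil either sign. The commutator errors produced by the cutoff are $\mathcal{O}(R^{-1})$ times first-order magnetic derivatives.
\item \emph{Kinetic control.} To absorb those commutator errors and to control $\|(D_x+By)^2u\|$ and $\|D_y^2u\|$, I will pair $A_Ru$ with the kinetic operator $Q=\frac12\bigl((D_x+By)^2+D_y^2\bigr)$ and take real parts. The cross term $\RE(Qu,xu)$ equals $(Q^{1/2}u,xQ^{1/2}u)$ plus a commutator term bounded by $C\|Q^{1/2}u\|\,\|u\|$, since $[Q,x]$ is first order in $D_x+By$. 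To handle the part where $x$ is very negative, I would use the imaginary part in the exterior region once more, this time weighted. The weight $2w(x/R)\sim 2|x|/R$ is exactly what keeps the Neumann-type error terms uniformly small as $R\to\infty$.
\end{itemize}
The same argument applies verbatim to $A_R^*$, since $P_\theta^*=P_{\bar\theta}$ by Proposition~\ref{prop2-1}. This gives bijectivity $H_m^2\to L^2$ with a uniform inverse bound.

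\textbf{Step 3 (conclusion).} Combining Steps 1 and 2 by induction on $N$, every iterated commutator $\mathrm{ad}_{\ell_1^w}\cdots\mathrm{ad}_{\ell_N^w}A_R^{-1}$ is a finite sum of products in which factors of $A_R^{-1}$ (mapping $L^2\to H^2_m$ uniformly) alternate with operators in $\mathrm{Op}S(m)$ or $\mathrm{Op}S(1)$. Each such product maps $L^2\to H^2_m$ with norms independent of $R>R_1$. Beals' lemma then gives $A_R^{-1}\in\mathrm{Op}S(m^{-2})$ with uniform seminorms. If the weighted estimate in Step 2 turns out to be delicate in the region where $x\ll-R$ and $Q\sim|x|$, I would first try the gauge and symplectic change of variables. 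This takes $(D_x+By,D_y)$ to a harmonic oscillator in one conjugate pair and $x$ to a linear function of the other pair. The kinetic lower bound $Q\ge B/2$ then decouples cleanly from the Stark term.
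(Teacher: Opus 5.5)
There is a genuine gap in Step 2. Your Beals framework (Steps 1 and 3) is sound, but everything rests on the uniform bound $\|u\|_{H^2_m}\le C\|A_Ru\|$, and nothing you propose establishes it in the region you flag yourself, $x\ll -R$ with $Q\sim|x|$. In fact, with your reading of the auxiliary term as $2w(x/R)\sim 2|x|/R$, that bound is false there, for every fixed $R>2$.

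For $x\le -2R$ one has $A_R=Q+\kappa x+V_\theta+\theta-z$ with $\kappa=1-2/R>0$. Put $\Pi_1=D_x+By$ and $\Pi_2=D_y$, so that $[\Pi_1,\Pi_2]=iB$. The guiding-centre coordinate $X=x+D_y/B$ commutes with both $\Pi_1$ and $\Pi_2$, and
\[
Q+\kappa x=\tfrac12\Pi_1^2+\tfrac12\bigl(\Pi_2-\kappa/B\bigr)^2-\frac{\kappa^2}{2B^2}+\kappa X .
\]
Take $u$ with the following two properties:
\begin{itemize}
\item $u$ lies in the $n$-th eigenspace of the oscillator part, with eigenvalue $B(n+\frac12)$;
\item the spectral measure of $X$ in $u$ is supported within $\mathcal{O}(1)$ of $-\kappa^{-1}B(n+\frac12)$.
\end{itemize}
Since $x=X-\Pi_2/B$ and $\Pi_2=\mathcal{O}(\sqrt n)$ on that level, $u$ sits near $x\approx -Bn/\kappa$, up to Gaussian tails. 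In that region $v_1=1$ and $\chi^R=0$. Using $V=\mathcal{O}(1)+o(|x|)$ (from Assumption~\ref{asm-res}), you get $\|A_Ru\|\le (C+o(n))\|u\|$, while $\|Qu\|\gtrsim n\|u\|$.

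A constant-size $-\IM a_R$ and a weight of size $|x|/R$ cannot exclude such states. Your fallback change of variables exhibits precisely this cancellation between the oscillator and $\kappa X$; it does not remove it. Since $A_R^{-1}\in\mathrm{Op}S(m^{-2})$ would imply the bound, the statement itself fails under that reading.

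The missing idea is the one the paper's proof is built on: the auxiliary term is there to \emph{neutralize} the negative Stark potential. It must give the global lower bound $|a_R|\ge \RE a_R\gtrsim m^2+R$ on $T^*\RR^2$. For that, the added term has to be about $2|x|$ for $x\le -2R$ and at least $2R$ elsewhere. So it must be read as $2Rw(x/R)$, which gives $x+2Rw(x/R)\ge R$. This is exactly what the paper invokes when it asserts $|a|\ge\RE a\gtrsim R$ for $m\lesssim R^{1/2}$.

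With that bound, your objection to a symbolic parametrix disappears. One never needs $a_R\in S(m^2)$. One only needs the following, uniformly in $R$:
\begin{itemize}
\item $\partial_{x,y}a_R\in S(m^2)$ and $\partial_{\xi,\eta}a_R\in S(m)$ (essentially what your Step 1 verifies);
\item the lower bound above.
\end{itemize}
These give $|m^k/a_R|\lesssim R^{-1+k/2}$, hence $a_R^{-1}\in S(m^{-2})$ uniformly. The paper then shows that the compositions of $\mathrm{Op}(a_R)$ and $\mathrm{Op}(a_R^{-1})$, in either order, are $1+\mathcal{O}(R^{-1/2})$ in $\mathrm{Op}S(1)$. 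It concludes with a Neumann series and Beals' theorem.

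With the same lower bound, your Step 2 also becomes a routine energy estimate: pair $A_Ru$ with $u$ and with $Qu$, and note that the commutator terms only involve the bounded gradient of the potential. Your Beals route would then go through as well. In either route, though, the proof is carried by this ellipticity bound, not by the exterior imaginary part or the cutoff bookkeeping you describe.
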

\begin{proof}
We define the symbol $a$ by the formula $A=a(x, y, D_x, D_y; R)$. 
We estimate $\partial^{\alpha}_{x, y} \partial^{\beta}_{\xi, \eta}a(x, y, \xi, \eta; R)^{-1}$ 
for $R \gg 1$. 

We first see that  
\begin{equation}\label{estimate-1}
\left|\frac{m^k}
{a(x, y, \xi, \eta; R)}\right|\lesssim R^{-1+k/2}
\end{equation}
for $0\le k \le 2$. 
This follows for $m/R^{1/2}\gg 1$ since $|a|\gtrsim m^2$, 
and for $m/R^{1/2}\lesssim 1$ since $|a|\ge \RE a \gtrsim R$. 
The inequality~\eqref{estimate-1} implies that
\begin{align*}
\left|\frac{\partial_{\xi, \eta} a(x, y, \xi, \eta; R)}
{a(x, y, \xi, \eta; R)}\right|
\lesssim \left|\frac{m}
{a(x, y, \xi, \eta; R)}\right| \lesssim R^{-1/2}. 
\end{align*}
The inequality~\eqref{estimate-1} also implies that
\begin{align*}
\left|\frac{\partial_{x, y} a(x, y, \xi, \eta; R)}
{a(x, y, \xi, \eta; R)}\right|
\lesssim \frac{m^2}
{|a(x, y, \xi, \eta; R)|}\lesssim 1. 
\end{align*}

When we estimate $\partial^{\alpha}_{x, y} \partial^{\beta}_{\xi, \eta}a(x, y, \xi, \eta)^{-1}$, 
taking the derivatives of the numerators does not make the estimates worse and taking the derivatives of the denominators
multiplies $(\partial_{x, y} a)/a$ or $(\partial_{\xi, \eta}a)/a$. 
Thus $|(\partial_{x, y} a)/a|, |(\partial_{\xi, \eta}a)/a| \lesssim 1$ and the inequality \eqref{estimate-1} imply 
that $a(x, y, \xi, \eta; R)^{-1}$ is $\mathcal{O}(R^{-1+k/2})$ in $S(m^{-k})$ when $R\to \infty$. 
This implies that $\partial_{x, y} a(x, y, \xi, \eta; R)^{-1}$ is $\mathcal{O}(R^{-1/2})$ in $S(m^{-1})$. 
We also see that $\partial_{\xi, \eta} a(x, y, \xi, \eta; R)^{-1}$ is $\mathcal{O}(R^{-1/2})$ in $S(m^{-2})$.  
This can be seen from 
\[
\left|\frac{m^2}
{a(x, y, \xi, \eta; R)}\cdot \frac{m}
{a(x, y, \xi, \eta; R)}\right| \lesssim R^{-1/2} 
\]
and $|(\partial_{x, y} a)/a|, |(\partial_{\xi, \eta}a)/a| \lesssim 1$. 

Note that $\partial_{x, y} a \in S(m^2)$ and $\partial_{\xi, \eta}a \in S(m)$ uniformly for $R\gg1$ since 
derivatives of $R\chi^R$ are uniformly bounded for $R\gg 1$. 
Then the symbol calculus for pseudodifferential operators implies that 
the symbols of 
\[
a(x, y, D_x, D_y; R)a^{-1}(x, y, D_x, D_y; R)-1
\]
and 
\[
a^{-1}(x, y, D_x, D_y; R)a(x, y, D_x, D_y; R)-1
\]
are $\mathcal{O}(R^{-1/2})$ in $S(1)$ (see \cite[Chapter~4]{Z}). 
Thus the Neumann series argument and Beals's theorem complete the proof.
\end{proof}

\begin{proof}[Proof of Lemma~\ref{lem-elliptic}]
Replacing $R$ with $cR$ for $0<c\ll 1$, we may assume that $\supp \chi_2$ is sufficiently small. 
Recall that $m=((\xi+By)^2+\eta^2+1)^{1/2}$. 
We introduce the magnetic Sobolev norm by $\|u\|_{H_{A, h}^k}=\|m_{\lambda}^k(x, y, D_x, D_y)u\|$, where 
$m_{\lambda}=((\xi+By)^2+\eta^2+\lambda)^{1/2}$ and $\lambda=\lambda_k>1$ is a sufficiently large constant such that 
$m_{\lambda}^k(x, y, D_x, D_y)^{-1}\in \mathrm{OP}S(m^{-k})$. 
Here $A=-Bydx$ is the vector potential for our Hamiltonian.
Then Lemma~\ref{lem-symbol} implies that the left hand side of Lemma~\ref{lem-elliptic} is bounded by 
\begin{align*}
C\|\chi_2^R u\|_{H_{A, h}^1}&\lesssim \|A_R \chi_2^R u\|_{H_{A, h}^{-1}}\\
&= \|(P_{\theta, R}-z)\chi_2^R u\|_{H_{A, h}^{-1}} \\
&\le \|\chi_2^R(P_{\theta, R}-z)u\|_{H_{A, h}^{-1}}+\|[P_{\theta, R}, \chi_2^R]u\|_{H_{A, h}^{-1}} \\
&\lesssim \|(P_{\theta, R}-z)u\|+R^{-1}\|u\|, 
\end{align*}
where we used the fact that $A_R=P_{\theta, R}-z$ on $\supp \chi_2^R$ since $\supp \chi_2$  is small. 
\end{proof}

\section{Semiclassical estimates of resonances}
\subsection{Non-trapping estimate}
Recall that $K_{[a, b]}$ is the trapped set for the classical Hamilton flow in the energy interval $[a, b]$. 
We first prove the non-trapping estimates of resonances and cutoff resolvent for magnetic Stark Hamiltonians. 
In this subsection, we fix the vector field $v(x)$ in the construction of $P_{\theta}(h)$. 
Namely, we define $U_{\theta}$ as in Section~2 and set $P_{\theta}(h)=U_{\theta}P(h)U_{\theta}^{-1}$ for real $\theta$ with 
small $|\theta|$. Then $P_{\theta}(h)$ for complex $\theta$ is defined through analytic continuation with respect to $\theta$. 
Although $v(x)$ is independent of $h$, the distortion parameter $\theta$ 
is dependent on $h$ in the following proposition. 
The following proposition is due to Martinez \cite{M} for the decaying potential case.
\begin{prop}\label{prop-nontrapping}
Suppose that Assumption~\ref{asm-res} holds and $K_{[a,b]}=\emptyset$ for some $a<b$. 
Then for any $M>0$ there exists $\widetilde{M}>0$ and $C>0$ 
such that for small $h>0$ and $z \in [a ,b]+i[-Mh\log h^{-1}, \infty)$, 
\[
\|(P_{\theta}(h)-z)^{-1}\|\le h^{-C} 
\]
where $(\IM z)_-=\max\{-\IM z, 0\}$ and $\theta=-i\widetilde{M}h\log h^{-1}$. 
\end{prop}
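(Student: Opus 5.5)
We will follow Martinez's escape-function strategy \cite{M}, adapted to the Stark potential and to the exterior complex translation of Section~2. The guiding observation is that, for $\theta=-i\widetilde{M}h\log h^{-1}$ and real $(x,y)$, the distorted symbol has the expansion
\[
p_\theta=p+\theta\bigl(v_1(1+\partial_xV)+\text{kinetic terms from }\partial\chi_0\bigr)+\mathcal{O}(\theta^2).
\]
In particular $\IM p_\theta\approx -\widetilde{M}h\log h^{-1}$ where $\chi_0=0$, since there $v_1=1$ and $\partial_x V\to0$ by Assumption~\ref{asm-res}(ii). So outside a compact set the distortion already supplies a negative imaginary part of size $\widetilde{M}h\log h^{-1}$. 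Choosing $\widetilde{M}>M$ beats $\IM z\ge -Mh\log h^{-1}$ there. The remaining work is in the compact region $\{\chi_0\neq0\}$ near energies $[a,b]$, and this is where non-trapping enters.

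\textbf{Step 1 (escape function).} Since $K_{[a,b]}=\emptyset$, we construct $G\in\Cinf_c(T^*\RR^2;\RR)$ with
\[
H_pG\ge 1 \quad\text{on } p^{-1}([a-\ve,b+\ve])\cap\{|(x,y)|\le R_1\},
\]
where $R_1$ is large and $\supp\chi_0$ is contained in the ball of radius $R_1$. We build $G$ by the standard averaging along trajectories. Every trajectory at these energies leaves any compact set in finite time, uniformly on the compact piece. Note that the Stark drift $\dot{x}$, $\dot{y}$ makes $|(x,y)|\to\infty$ in at least one time direction. We then need a global escape function. For this we glue $G$ to the distortion region, i.e. we arrange that $H_pG\ge -C_0$ wherever $v_1=1$, with $C_0$ fixed and independent of $\widetilde{M}$.

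\textbf{Step 2 (conjugation).} Set $P_{\theta,G}=e^{tG^w/h}P_\theta e^{-tG^w/h}$ with $t=M'h\log h^{-1}$. Because $G$ is compactly supported, $e^{\pm tG^w/h}$ has norm at most $h^{-CM'}$. Its symbol is $p_\theta - itH_pG+\mathcal{O}(t^2+h t)$. Hence on the compact energy shell
\[
\IM p_{\theta,G}\le -M'h\log h^{-1}+\mathcal{O}(h\log h^{-1}\cdot|\theta|/h).
\]
The $\theta$-contributions from $\partial\chi_0$ are of the same order $h\log h^{-1}$, so the constants must be ordered. We take $M'$ large first, and then fix $\widetilde{M}$ relative to $M'$ so that the combined imaginary part is $\le -(M+1)h\log h^{-1}$ on the whole energy shell. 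Off the energy shell, $p_{\theta,G}-z$ is elliptic in $S(m^2)$, in the magnetic symbol class of Lemma~\ref{lem-symbol}.

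\textbf{Step 3 (estimate).} We use a microlocal partition $1=\psi^w+(1-\psi)^w$, with $\psi$ localizing to the energy shell. By the sharp G{\aa}rding inequality,
\[
-\IM\bigl(\psi^w u,(P_{\theta,G}-z)\psi^w u\bigr)\gtrsim h\log h^{-1}\|\psi^w u\|^2-\mathcal{O}(h)\|u\|^2 .
\]
On the complement, ellipticity gives an $\mathcal{O}(1)$ bound, via a parametrix as in Lemma~\ref{lem-symbol}. In the far region $|(x,y)|\ge R_1$ the imaginary part of $P_\theta$ itself gives the estimate as in the proof of Lemma~\ref{lem-inv}, with commutator errors $\mathcal{O}(h)$. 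Gluing the pieces yields
\[
\|(P_{\theta,G}-z)^{-1}\|\le C(h\log h^{-1})^{-1}.
\]
Conjugating back gives $\|(P_\theta-z)^{-1}\|\le h^{-C}$.

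\textbf{Main obstacle.} The delicate point is Step 1 together with the constant bookkeeping of Step 2. Since $p$ is unbounded below in $x$, the energy surface is non-compact, and the transition region $\partial\chi_0\neq0$ produces $\theta$-dependent non-sign-definite terms. We first try $G=\tilde\chi(x,y)\,\bigl(x\xi+y\eta\bigr)$-type modified by the drift, i.e. $G$ adapted to $(\xi+By)$, $\eta$, and check positivity numerically along the flow. If that fails, we fall back on the abstract averaging construction together with choosing $\chi_0$'s transition region inside the set where $H_pG\ge1$.
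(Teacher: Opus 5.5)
\textbf{Verdict: your route is the paper's, but the constants do not close with the condition you impose on $G$.} Your setup matches the paper: conjugate $P_\theta$ by $e^{\pm\ve G^w/h}$ with $\ve=M'h\log h^{-1}$, then split the estimate into ellipticity, sharp G\r{a}rding on the compact part of the shell, and the quadratic form in the distortion region. The problem is in Steps 1--2, the bookkeeping you flag as the main obstacle.

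\textbf{The two opposing constraints.}
First, on the compact part of the shell, the $\theta$-terms from $\partial\chi_0$ have no sign. These are the kinetic terms and $\IM\theta\, v_1(1+\partial_xV)$ with $0<v_1<1$. They have size $C_1\widetilde{M}h\log h^{-1}$, with $C_1$ depending only on $\chi_0$, $V$ and the energy window. So you need $M'\ge C_1\widetilde{M}+M+1$.

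Second, where $v_1=1$, the conjugation adds $+M'(H_pG)_-\,h\log h^{-1}$ to the imaginary part. This competes with $-\widetilde{M}h\log h^{-1}(1+\partial_xV)$ from the translation. With only $H_pG\ge -C_0$, you need $\widetilde{M}\gtrsim M'C_0+M$.

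Together these force $C_0C_1$ to be small. A merely ``fixed'' $C_0$ does not give this, and no ordering such as ``$M'$ first, then $\widetilde M$'' repairs it.

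\textbf{Why the negative part cannot be hidden.} $G$ is compactly supported and every trajectory in the shell escapes, so $\int_\RR (H_pG)(\exp(tH_p)\rho)\,dt=0$. Hence every trajectory through the region where $H_pG\ge 1$ carries a negative part of $H_pG$ on the shell outside $\{|(x,y)|\le R_1\}$. That is the distortion region, and that part of the shell (where $x\to-\infty$ or $|y|\to\infty$) is not elliptic. For the same reason, your far-region step in Step 3 must bound $-\IM(u,(P_{\theta,G}-z)u)$ including the term $M'h\log h^{-1}\,\mathrm{Op}(H_pG)$, not just ``the imaginary part of $P_\theta$ itself''.

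\textbf{The missing ingredient.} The paper states it explicitly: the negative part of $\{p,G\}$ must be small on all of $T^*\RR^2$, specifically $\ll\widetilde{M}/M'$. You can achieve this by spreading the negative part over long time intervals. For example, take $G(\rho)=\int f(t)\,\chi(\exp(tH_p)\rho)\,dt$, with $f$ a smoothed function satisfying $f(\pm T)=0$, $-f'=1$ on $[-T_0,T_0]$, and $-f'=-T_0/(T-T_0)$ on the rest of $[-T,T]$.
\begin{itemize}
\item Non-trapping gives a uniform bound on the time trajectories spend in $\supp\chi$, so $(H_pG)_-=\mathcal{O}(1/T)$.
\item $H_pG$ stays above a fixed positive constant on the compact part of the shell, so after normalization $H_pG\ge1$ there and $(H_pG)_-$ is still $\mathcal{O}(1/T)$.
\end{itemize}
Choose the constants in this order: $\widetilde{M}>M$ with margin; then $M'\gg\widetilde{M}$, as in the paper; then $T$, hence $G$, so that $M'(H_pG)_-\ll\widetilde{M}$. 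The bound $\|e^{\pm\ve G^w/h}\|\le h^{-C}$ then depends on $G$, which is harmless.

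\textbf{Two smaller slips.}
\begin{itemize}
\item $e^{tG^w/h}P_\theta e^{-tG^w/h}$ has symbol $p_\theta+itH_pG+\cdots$. To get a negative imaginary part you need the opposite conjugation $e^{-tG^w/h}P_\theta e^{tG^w/h}$, as the paper uses.
\item Your error term $\mathcal{O}(h\log h^{-1}\cdot|\theta|/h)=\mathcal{O}(\widetilde{M}h(\log h^{-1})^2)$ would swamp the gain $M'h\log h^{-1}$. It should be $\mathcal{O}(|\theta|)$.
\end{itemize}
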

\begin{proof}
We follow the arguments of Sj\"{o}stand and Zworski \cite[Theorem 1]{SZ} 
while we need a modification due to the non-elliptic aspect of magnetic Stark Hamiltonians.
Take a sufficiently large $R>0$. 
Then the non-trapping assumption enables us to construct a function $G\in \Cinf_c(T^*\RR^2)$ 
such that $\{p, G\}\ge 1$ on $p^{-1}([\widetilde{a}, \widetilde{b}])\cap 
\{(x, y, \xi, \eta)|\,|(x, y)|<R\}$ 
for some $\widetilde{a}<a<b<\widetilde{b}$. 
Moreover we may assume that the negative part of $\{p, G\}$ is sufficiently small on $T^*\RR^2$.  
Here $\{p, G\}$ is the Poisson bracket of $p$ and $G$ 
(see \cite[subsection~4.2]{SZ}). 
We set $P_{\theta, \ve}(h)=e^{-\ve G^W/h}P_{\theta}(h)e^{\ve G^W/h}$, 
where $\ve=M'h\log h^{-1}$ with $M'\gg \widetilde{M}$. 
We consider $z$ with $a\le \RE z \le b$ and $(\IM z)_-\ll \ve$.  

We take microlocal cutoffs $\Psi_1$, $\Psi_2$ and $\Psi_3$ near 
$\{x \ge R_1\}\cup \{|x|<R_1, |y|<R', p(x, y, \xi, \eta) \not \in [\widetilde{a}, \widetilde{b}]\}, 
\{|x|<R_1, |y|<R',  p(x, y, \xi, \eta) \in [\widetilde{a}, \widetilde{b}]\}$ and 
$\{x<-R_1\}\cup \{|x|<R_1, |y|>R'\}$ respectively, where $1\ll R_1 \ll R' \ll R$.  
We take $u\in \Cinf_c(\RR^2)$. 
The elliptic estimate implies 
$\|(P_{\theta, \ve}-z)\Psi_1 u\|\ge c\|\Psi_1 u\|-\mathcal{O}(h^{\infty})\|u\|$ 
since $R_1\gg 1$ due to the Stark potential. 
By the condition that $\{p, G\}\ge 1$ on $p^{-1}([\widetilde{a}, \widetilde{b}])\cap \{|(x, y)|<R\}$ and 
$M'\gg \widetilde{M}$, the imaginary part of the symbol of $(P_{\theta, \ve}$ is larger than 
$c_0\ve$ for some $c_0>0$ on the support of the symbol of $\Psi_2 $. 
Then the sharp G\r{a}rding inequality implies that there exists $c>0$ such that 
$\|(P_{\theta, \ve}-z)\Psi_2 u\|
\ge c\ve\|\Psi_2 u\|-\mathcal{O}(h^{\infty})\|u\|$ 
for $(\IM z)_-\ll \ve$.   
Since the semiclassical principal symbol of $P_{\theta, \ve}$ is not globally elliptic due to the Stark potential, 
we estimate $\Psi_3 u$ by considering the quadratic form. 
Then we see that
$-\IM (\Psi_3 u, (P_{\theta, \ve}-z)\Psi_3 u) \ge c|\IM\theta| \|\Psi_3 u\|^2$ 
and hence we have 
$\|(P_{\theta, \ve}-z)\Psi_3 u\| \ge c|\IM\theta| \|\Psi_3 u\|$
for $(\IM z)_-\ll |\IM\theta|$ 
since the negative part of $\{p, G\}$ is sufficiently small on $T^*\RR^2$. 
 
Thus 
\begin{align*}
\|u\|
&\le C\ve^{-1}\sum_{j=1}^3 \|(P_{\theta, \ve}-z)\Psi_j u\| \\
&\le C\ve^{-1} \|(P_{\theta, \ve}-z)u\|+C\ve^{-1} \sum_{j=1}^3 \|[P_{\theta, \ve}\Psi_j] u\| \\
&\le C\ve^{-1} \|(P_{\theta, \ve}-z)u\|+Ch/{\ve}(\||(P_{\theta, \ve}-z)u\|+\|u\|).
\end{align*}
Since $\ve=M'h\log h^{-1}$, we substitute $Ch/{\ve}\|u\|<1/2\|u\|$ and 
we obtain $\|(P_{\theta, \ve}-z)u\|\ge c\ve \| u\|$.
Since $\|e^{\pm\ve G^W/h}\|\lesssim h^{-C_0}$ for some $C_0>0$, this completes the proof. 
\end{proof}

\subsection{Shape resonance model}
In this subsection, we discuss the shape resonances for the magnetic Stark Hamiltonian generated by potential wells. 
Recall that $p(x, y, \xi, \eta)=\frac{1}{2}(\xi+By)^2+\frac{1}{2}\eta^2+x+V(x, y)$. 
and $K_{[a,b]}$ is the trapped set in the energy interval $[a,b]$. 
Throughout this subsection, we assume Assumption~\ref{asm-res} and Assumption~\ref{asm-shape}. 
Assumption~\ref{asm-shape} implies 
\[
K_{[a,b]}=\{(x, y,\xi, \eta)|(x, y)\in \mathcal{G}^{\mathrm{int}}, a\le p(x, y, \xi, \eta)\le b\}.
\] 
We can take small $\delta>0$ such that Assumption~\ref{asm-shape} holds true with $[a,b]$ replaced by $[a-\delta, b+\delta]$, 
and there exists a cutoff function $\chi_0$ near $\mathcal{G}^{\mathrm{int}}$ such that 
\[
\supp \partial \chi_0 \Subset \{(x, y)\in \RR^2|x+V(x, y)>b+2\delta\}.
\] 
We fix such $\delta>0$ and $\chi_0$. 
Complex distorted operators in this subsection are constructed outside $\supp \chi_0$. 
Let $V^{\mathrm{ext}}(x, y)$ be a potential obtained by filling up the well: 
$V^{\mathrm{ext}}(x, y)=V(x, y)+x$ near $\supp (1-\chi_0)$ and 
$V^{\mathrm{ext}}(x, y)>b+2\delta$ near $\mathcal{G}^{\mathrm{int}}$, 
and $P^{\mathrm{ext}}_{\theta}(h)$ be the corresponding distorted operator.
Let $V^{\mathrm{int}}(x, y)$ be a potential flattened outside the well: 
$V^{\mathrm{int}}(x)=V(x, y)+x$ near $\supp \chi_0$ 
and $V^{\mathrm{int}}(x, y)=b+2\delta$ near $\mathcal{G}^{\mathrm{ext}}$, 
and 
\[
P^{\mathrm{int}}=\frac{1}{2}(hD_{x, y}+By)^2+\frac{1}{2}(hD_y)^2+V^{\mathrm{int}}.
\]
Then $P^{\mathrm{int}}$ is our reference operator and we show that the resonances of $P$ are approximated by discrete eigenvalues of 
$P^{\mathrm{int}}$. 
In the following we set $\alpha(h)=h^C$ and $\gamma(h)=Mh\log h^{-1}$
for a fixed $M>0$ and fixed large $C>0$, $\widetilde{M}>0$. 
Then 
$\|(P^{\mathrm{ext}}_{\theta}(h)-z)^{-1}\|=\mathcal{O}(\alpha(h)^{-1})$ 
for $a-\delta \le \RE z \le b+\delta$, $\IM z \ge -\gamma(h)$ 
and $\theta =-i\widetilde{M}h \log h^{-1}$ by Proposition~\ref{prop-nontrapping}.  
% \begin{remark}
% The results in this subsection remain true 
% if we replace the non-trapping condition outside the well by a resolvent assumption as follows: 
% there exist functions $\alpha (h)$, $\gamma(h)$ and real numbers $a<b$ with 
% $\alpha(h), \gamma(h)>e^{-S/h}$ for any $S>0$ for small $h>0$ such that 
% $\|(P^{\mathrm{ext}}_{\theta}(h)-z)^{-1}\|=\mathcal{O}(\alpha(h)^{-1})$ 
% for $a-\delta \le \RE z \le b+\delta$ and $\IM z \ge -\gamma(h)$. 
% \end{remark}
The basic estimate in this subsection is the following magnetic Agmon estimate 
based on the magnetic Sobolev space. 
Recall that the magnetic Sobolev norm was introduced in 
the proof of Lemma~\ref{lem-elliptic}. 
The proof for the zero magnetic field case in \cite[Chapter~7]{Z}) is easily extended to this case.   
\begin{lemma}\label{lemma-Agmon}
For any $\chi\in \Cinf_c(\RR^2)$ with
$\supp \chi \subset \{(x, y)\in \RR^2|V(x, y)+x>b+2\delta\}$ and $C_0>0$, 
there exists $S_0>0$ and $C>0$ such that for 
any $z\in [b-C_0, b+\delta]+i[-C_0,C_0]$ and small $h>0$, 
\[ \|\chi u\|_{H_{A, h}^2} \le e^{-S_0/h}\|u\|+C\|(P-z)u\|. \]
\end{lemma}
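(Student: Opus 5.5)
We prove Lemma~\ref{lemma-Agmon} by the standard Agmon weighted energy argument of \cite[Chapter~7]{Z}, adapted to the magnetic kinetic term, and then upgrade from $L^2$ to $H^2_{A,h}$ by magnetic elliptic regularity.

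\textbf{Step 1: setup.} Choose an open set $U$ with $\supp\chi\Subset U\Subset\{V+x>b+2\delta\}$. On $\overline U$ we have $V+x-\RE z\ge \delta$ for $\RE z\le b+\delta$. Choose a Lipschitz weight $\varphi\ge 0$, depending only on $(x,y)$, such that:
\begin{itemize}
\item $\varphi=2S_0$ on $\supp\chi$;
\item $\varphi=0$ outside $U$;
\item $|\nabla\varphi|^2\le (V+x-b-\delta)/2$ on $U$.
\end{itemize}
This is possible because $V+x-b-\delta\ge\delta$ on $\overline U$, so a small multiple of a distance function to $\partial U$ works, with $S_0>0$ small.

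\textbf{Step 2: the conjugation identity.} For $w=e^{\varphi/h}u$ the magnetic conjugation gives
\[
e^{\varphi/h}(hD+A)e^{-\varphi/h}=hD+A+i\nabla\varphi .
\]
Hence, exactly as in the nonmagnetic case,
\[
\RE\bigl(e^{\varphi/h}(P-z)e^{-\varphi/h}w,w\bigr)=\tfrac12\|(hD+A)w\|^2+\bigl((V+x-\RE z-\tfrac12|\nabla\varphi|^2)w,w\bigr).
\]
The magnetic potential $A$ is real, so it cancels in the cross terms just as the zero-field computation does. The point to verify is that the term $i\nabla\varphi\cdot(hD+A)+(hD+A)\cdot i\nabla\varphi$ is antisymmetric, so that only $-\frac12|\nabla\varphi|^2$ survives in the real part.

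\textbf{Step 3: the weighted estimate.} Note that $|\IM z|\le C_0$ does not enter the real part. Split the potential term into the region $U$ and its complement $U^c$:
\begin{itemize}
\item On $U$ the coefficient is $\ge\delta/2$.
\item On $U^c$ we have $\varphi=0$, so $w=u$ there.
\end{itemize}
In the problematic region $U^c$, where $V+x-\RE z$ may be negative, we simply bound the contribution by $C\|u\|^2$. We also take $\varphi$ bounded by $2S_0$ on $U$. Then
\[
\|w\|^2_{L^2(U)}\lesssim \|e^{\varphi/h}(P-z)u\|\,\|w\|+C\|u\|^2 .
\]
This is too weak as written, so we refine it. Take $\varphi=\min(\varphi_0,\,\text{const})$, insert a cutoff $\psi$ equal to $1$ near $\supp\chi$ and supported in $U$, and handle the commutator terms. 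These live where $\varphi$ is smaller; arranging $\varphi\le S_0$ on the commutator region, they cost at most $e^{S_0/h}\|u\|$. Combining,
\[
e^{2S_0/h}\|\chi u\|\lesssim e^{2S_0/h}\|(P-z)u\|+e^{S_0/h}\|u\| .
\]
This gives $\|\chi u\|\le e^{-S_0/h}\|u\|+C\|(P-z)u\|$ after renaming constants.

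\textbf{Step 4: upgrade to $H^2_{A,h}$.} Take $\tilde\chi$ with $\tilde\chi=1$ on $\supp\chi$ and $\supp\tilde\chi\subset\{V+x>b+2\delta\}$. By Lemma~\ref{lem-symbol}-type magnetic elliptic estimates (the symbol calculus in $S(m^2)$),
\[
\|\chi u\|_{H^2_{A,h}}\lesssim \|(P-z)\chi u\|+\|\chi u\|\lesssim \|(P-z)u\|+\|[P,\chi]u\|+\|\chi u\| .
\]
The commutator $[P,\chi]$ is $h$ times a first-order magnetic operator supported in $\supp\nabla\chi$. It is bounded by $h\|\tilde\chi u\|_{H^1_{A,h}}$, which is in turn controlled by the energy identity applied with $\tilde\chi$ in place of $\chi$ (Step 3 with $\tilde\chi$).

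\textbf{Main obstacle.} The main difficulty is the bookkeeping of Step 3: organizing the weight so that the commutator terms land where $\varphi\le S_0$. The magnetic modification itself is harmless, because of the real gauge cancellation in Step 2.
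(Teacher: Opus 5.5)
Your proposal is correct and is essentially the paper's argument. The paper only asserts that the zero-field Agmon estimate of \cite[Chapter~7]{Z} carries over, and your conjugation identity $e^{\varphi/h}(hD+A)e^{-\varphi/h}=hD+A+i\nabla\varphi$ (with the real vector potential dropping out of the real part of the quadratic form), followed by the local elliptic upgrade to $H^2_{A,h}$, is exactly that extension.

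Two small points for the write-up. The cutoff $\psi$ supported in $U$ is essential, not a refinement: your preliminary bound of the $U^c$ contribution by $C\|u\|^2$ is false, because $V+x$ is unbounded below. Also, if you take the real part of the pairing of $e^{\varphi/h}(P-z)e^{-\varphi/h}(\psi w)$ with $\psi w$, the commutator with $\psi$ contributes only zeroth-order terms of size $\mathcal{O}(h)\|w\|^2_{L^2(\supp\nabla\psi)}$, so no derivatives of $u$ on $\supp\nabla\psi$ need to be controlled.
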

This is also valid for $P_{\theta}$ if $\supp \chi$ is away from the region of deformation 
in the definition of $P_{\theta}$. 
In the following we fix $S_0$ such that  
Lemma~\ref{lemma-Agmon} holds true where $\chi$ is a cutoff near $\supp \partial \chi_0$. 
Moreover we may assume Lemma~\ref{lemma-Agmon} with $P$ replaced by $P^{\mathrm{int}}$ holds true  
where $\chi$ is a cutoff near $\supp (1-\chi_0)$.

The crucial estimate in our arguments is the following form of resolvent estimate. 
This type of estimate was proved for Stark Hamiltonians in \cite[Proposition~4.1]{K1}. 
\begin{prop}\label{prop-resolvent}
For small $h>0$,  
\[\|(1-\chi_0)(P_{\theta}-z)^{-1}\|\le C\alpha(h)^{-1}, \|\chi_0(P_{\theta}-z)^{-1}\|
\le C\mathrm{dist}(z, \sigma(P^{\mathrm{int}}))^{-1}\]
if $a-\delta \le \RE z \le b+\delta$, $\IM z \ge -\gamma(h)$ 
and $\mathrm{dist}(z, \sigma(P^{\mathrm{int}}))\ge e^{-S_0/h}$.
\end{prop}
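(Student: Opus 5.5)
We follow the gluing strategy of \cite{NSZ} and \cite[Proposition~4.1]{K1}, combining the exterior estimate from Proposition~\ref{prop-nontrapping} with the interior resolvent $(P^{\mathrm{int}}-z)^{-1}$. The argument is local, so we bound $u$ in terms of $f=(P_{\theta}-z)u$ and apply the result to $u=(P_{\theta}-z)^{-1}f$.

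\textbf{Step 1: cutoffs and commutator support.} Take cutoffs $\chi_1,\chi_2\in\Cinf_c(\RR^2)$ with $\chi_1=1$ on $\supp\chi_0$ and $\chi_0=1$ on $\supp\chi_2$. Choose them so that $\supp\partial\chi_1$ and $\supp\partial\chi_2$ lie in $\{x+V>b+2\delta\}$, in the same region as $\supp\partial\chi_0$. Away from the deformation region we have $P_\theta=P$. On $\supp\chi_1$ we have $P=P^{\mathrm{int}}$, and on $\supp(1-\chi_2)$ we have $P_\theta=P^{\mathrm{ext}}_\theta$. Hence
\[
(P^{\mathrm{int}}-z)\chi_1u=\chi_1 f+[P,\chi_1]u,\qquad
(P^{\mathrm{ext}}_\theta-z)(1-\chi_2)u=(1-\chi_2)f-[P,\chi_2]u .
\]
The commutators are first-order magnetic operators $h\,\partial\chi_j\cdot(hD+\dots)+h^2\Delta\chi_j$. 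They are supported in the classically forbidden region.

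\textbf{Step 2: Agmon control of the commutators.} By Lemma~\ref{lemma-Agmon}, applied to $P_\theta$ with a cutoff $\chi$ equal to $1$ near $\supp\partial\chi_j$, we get
\[
\|[P,\chi_j]u\|\le Ch\|\chi u\|_{H^1_{A,h}}\le e^{-S_0/h}\|u\|+C\|f\| .
\]
This uses $\RE z\in[a-\delta,b+\delta]$ and $|\IM z|$ bounded, which holds since $\IM z\ge-\gamma(h)$. For large $\IM z$ the bound is trivial.

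\textbf{Step 3: the two resolvent bounds.} For the interior piece, $P^{\mathrm{int}}$ is self-adjoint, so $\|(P^{\mathrm{int}}-z)^{-1}\|=d^{-1}$ with $d=\mathrm{dist}(z,\sigma(P^{\mathrm{int}}))$. This gives
\[
\|\chi_1 u\|\le d^{-1}\bigl(C\|f\|+e^{-S_0/h}\|u\|\bigr).
\]
For the exterior piece, Proposition~\ref{prop-nontrapping} applied to $P^{\mathrm{ext}}_\theta$ gives
\[
\|(1-\chi_2)u\|\le C\alpha^{-1}\bigl(\|f\|+e^{-S_0/h}\|u\|\bigr).
\]
Since $\chi_1+(1-\chi_2)\ge1$, adding the two estimates yields
\[
\|u\|\le (Cd^{-1}+C\alpha^{-1})\|f\|+(d^{-1}+C\alpha^{-1})e^{-S_0/h}\|u\|.
\]

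\textbf{Step 4: absorbing the error (main obstacle).} The difficulty is that $d^{-1}e^{-S_0/h}$ is only bounded by $1$, not small, when $d$ is as small as $e^{-S_0/h}$. We fix this by first shrinking $S_0$ slightly in the Agmon estimate, replacing it by some $S_0'>S_0$ in Lemma~\ref{lemma-Agmon}, while keeping the condition $d\ge e^{-S_0/h}$. Then $d^{-1}e^{-S_0'/h}\le e^{-(S_0'-S_0)/h}$, and $\alpha^{-1}e^{-S_0'/h}=h^{-C}e^{-S_0'/h}$ is also small. Both error terms can therefore be absorbed, giving $\|u\|\le C(d^{-1}+\alpha^{-1})\|f\|$.

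\textbf{Step 5: the sharper stated bounds.} We reinsert this bound into the two inequalities of Step 3. Then $\|(1-\chi_0)u\|\le\|(1-\chi_2)u\|$ picks up only the term $\alpha^{-1}\|f\|$, plus $\alpha^{-1}e^{-S_0'/h}(d^{-1}+\alpha^{-1})\|f\|=\mathcal O(\alpha^{-1})\|f\|$. Similarly $\|\chi_0u\|\le\|\chi_1u\|\le Cd^{-1}\|f\|+d^{-1}e^{-S_0'/h}\cdot C(d^{-1}+\alpha^{-1})\|f\|\le Cd^{-1}\|f\|$. These are exactly the two claimed estimates.

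Finally, we need $z\notin\sigma(P_\theta)$ so that $(P_\theta-z)^{-1}$ exists. This follows from the a priori bound together with the same bound for the adjoint $P_\theta^*=P_{\bar\theta}$ from Proposition~\ref{prop2-1}. Alternatively, one can use the Fredholm property of Proposition~\ref{prop2-2} and injectivity.
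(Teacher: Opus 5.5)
\textbf{Verdict: genuine gap in Steps 2 and 4, easily repaired.} Your overall scheme matches the paper's: split $u$ into an interior and an exterior piece, control the commutators by Lemma~\ref{lemma-Agmon}, and close the two coupled inequalities. Phrasing it as an a priori estimate in $u$ and $f$ is actually cleaner, because with the Fredholm property it also gives $z\notin\sigma(P_\theta)$, which the paper tacitly assumes.

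\textbf{Where it breaks.} In Step~2 you bound $Ch\|\chi u\|_{H^1_{A,h}}$ by $e^{-S_0/h}\|u\|+C\|f\|$, discarding the factor $h$ that the commutator carries. This is what creates the problem in Step~4: the coefficient $d^{-1}e^{-S_0/h}$ in front of $\|u\|$ is only $\le 1$ and cannot be absorbed. Your repair uses Lemma~\ref{lemma-Agmon} with some exponent $S_0'>S_0$ while keeping $d\ge e^{-S_0/h}$. That is not available: $S_0$ is the exponent for which the Agmon estimate has been fixed, and the lemma only asserts existence of some exponent, which can be decreased but not increased. So what your Steps~3--5 actually prove is the proposition under the stronger hypothesis $d\ge e^{-S/h}$ for some $S<S_0$. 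That is a weaker statement than the one claimed. It would suffice downstream after renaming constants, but it is not the proposition as stated.

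\textbf{The fix.} No constant needs to be sacrificed. Since $[P,\chi_j]$ is $h$ times a first-order magnetic operator supported where Lemma~\ref{lemma-Agmon} applies, you get
\[
\|[P,\chi_j]u\|\le Ch\bigl(e^{-S_0/h}\|u\|+C\|f\|\bigr).
\]
The error coefficient after adding your two inequalities then becomes $Ch\,e^{-S_0/h}(d^{-1}+C\alpha^{-1})$. Here $Ch\,d^{-1}e^{-S_0/h}\le Ch$ under exactly the hypothesis $d\ge e^{-S_0/h}$, and $Ch\,\alpha^{-1}e^{-S_0/h}=Ch^{1-C}e^{-S_0/h}$ is also small. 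Both terms are absorbed with the original $S_0$, and your Step~5 goes through unchanged.

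This is precisely how the paper closes the argument. In the interior inequality it absorbs
\[
Ch\,d^{-1}e^{-S_0/h}\|\chi_0(P_\theta-z)^{-1}\|\le Ch\|\chi_0(P_\theta-z)^{-1}\|,
\]
then substitutes the interior and exterior inequalities into each other rather than adding them. It also works directly with $\chi_0$ and $1-\chi_0$, so your auxiliary cutoffs $\chi_1,\chi_2$ are unnecessary.
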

\begin{proof}
We recall the proof in \cite{K1} for convenience.
The Agmon estimate implies 
\begin{align*}
\|(1-\chi_0)(P_{\theta}-z)^{-1}\|&
=\|(P_{\theta}^{\mathrm{ext}}-z)^{-1}(P_{\theta}^{\mathrm{ext}}-z)(1-\chi_0)(P_{\theta}-z)^{-1}\|\\
&\le\alpha(h)^{-1}\|(P_{\theta}-z)(1-\chi_0)(P_{\theta}-z)^{-1}\|\\
&\le\alpha(h)^{-1}(1+\|[P_{\theta}, \chi_0](P_{\theta}-z)^{-1}\|)\\ 
&\le C\alpha(h)^{-1}(1+e^{-S_0/{h}}\|(P_{\theta}-z)^{-1}\|)\\
&\le C\alpha(h)^{-1}(1+e^{-S_0/{h}}\|\chi_0(P_{\theta}-z)^{-1}\|).
\end{align*}
The last inequality follows by subtracting $C\alpha(h)^{-1}e^{-S_0/{h}}\|(1-\chi_0)(P_{\theta}-z)^{-1}\|
\le \frac{1}{2}\|(1-\chi_0)(P_{\theta}-z)^{-1}\|$ from both sides for small $h>0$.
The Agmon estimate also implies 
\begin{align*}
\|\chi_0(P_{\theta}-z)^{-1}\|&= \|(P^{\mathrm{int}}-z)^{-1}(P^{\mathrm{int}}-z)\chi_0(P_{\theta}-z)^{-1}\| \\
&\le \mathrm{dist}(z, \sigma(P^{\mathrm{int}}))^{-1}\|(P_{\theta}-z)\chi_0(P_{\theta}-z)^{-1}\| \\
&\le \mathrm{dist}(z, \sigma(P^{\mathrm{int}}))^{-1}(1+\|[P_{\theta},\chi_0](P_{\theta}-z)^{-1}\|) \\
&\le C\mathrm{dist}(z, \sigma(P^{\mathrm{int}}))^{-1}(1+h e^{-S_0/{h}}\|(P_{\theta}-z)^{-1}\|)\\
&\le C\mathrm{dist}(z, \sigma(P^{\mathrm{int}}))^{-1}(1+h e^{-S_0/{h}}\|(1-\chi_0)(P_{\theta}-z)^{-1}\|).
\end{align*}
The last inequality follows by subtracting
$Ch\mathrm{dist}(z, \sigma(P^{\mathrm{int}}))^{-1}e^{-S_0/{h}}\|\chi_0(P_{\theta}-z)^{-1}\|\le 
Ch\|\chi_0(P_{\theta}-z)^{-1}\|$ from both sides for small $h>0$.
We substituting the left hand side of each inequality for the right hand side of the other inequality. 
Then we obtain the claimed inequalities by subtracting the small remainder from both sides. 
\end{proof}

This proposition shows that there exists a gap between shape resonances and the other resonances. 
Namely, we have  
\[\mathrm{Res}(P(h))\cap ([a-\delta, b+\delta]-i[e^{-S_0/h}, \gamma(h)])
=\emptyset\mspace{10mu}\text{for small}\mspace{7mu} h>0.\]

To prove Theorem~\ref{thm-1}, we decompose resonances into clusters following earlier 
works by Stefanov \cite{S2} and Nakamura, Stefanov and Zworski \cite{NSZ}.
\begin{lemma}\label{lemma-cluster}
For small $h>0$, there exist $a_j(h)<b_j(h)<a_{j+1}(h)$ such that 
\[\left(\mathrm{Res}(P(h))\cup 
\sigma(P^{\mathrm{int}})\right)\cap ([a-\frac{\delta}{2}, b+\frac{\delta}{2}]-i[0, e^{-S_0/h}])
\subset \cup_{j=1}^{J(h)}\Omega_j(h),\]
where $\Omega_j(h)=[a_j(h), b_j(h)]-i[0, e^{-S_0/h}]$, 
$b_j-a_j\le Ch^{-2}e^{-S_0/h}$, $a_{j+1}-b_j\ge 2e^{-S_0/h}$, 
$a_1\in(a-\frac{2}{3}\delta, a-\frac{1}{3}\delta)$, $b_{J(h)}\in (b+\frac{1}{3}\delta,b+\frac{2}{3}\delta)$ and 
$\mathrm{Res}(P)\cap(([a_1-ch^2, a_1]-i[0, e^{-S_0/h}])
\cup ([b_{J(h)}, b_{J(h)}+ch^2]-i[0, e^{-S_0/h}]))
=\emptyset$.
Moreover, 
\[\|(1-\chi_0)(P_{\theta}-z)^{-1}\|\le C\alpha(h)^{-1},\mspace{7mu} z\in \partial \widetilde{\Omega}_j(h),\]
where $\widetilde{\Omega}_j(h)=[a_j(h)-e^{-S_0/h}, 
b_j(h)+e^{-S_0/h}]+i[-2e^{-S_0/h}, e^{-S_0/h}]$.
\end{lemma}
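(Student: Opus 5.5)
The plan is to follow the clustering argument of Stefanov \cite{S2} and Nakamura, Stefanov and Zworski \cite{NSZ}, combining a counting bound with Proposition~\ref{prop-resolvent}.

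First, I bound the number of points to be clustered. The number of eigenvalues of $P^{\mathrm{int}}$ in $[a-\delta,b+\delta]$ is $\mathcal{O}(h^{-2})$. This follows from the Weyl law for the reference operator, or from a min-max comparison with a harmonic-type operator, since $V^{\mathrm{int}}$ is constant near $\mathcal{G}^{\mathrm{ext}}$ and the magnetic Laplacian has Landau levels $(n+\frac12)hB$ of infinite multiplicity only. Hmm — more precisely, I would use that $V^{\mathrm{int}}\ge b+2\delta$ outside a compact set, so the spectrum below $b+\delta$ is discrete and counted by $\mathcal{O}(h^{-2})$.

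Next, I show that resonances near the real axis lie close to $\sigma(P^{\mathrm{int}})$. By Proposition~\ref{prop-resolvent}, every resonance $z$ with $a-\delta\le\RE z\le b+\delta$ and $\IM z\ge -\gamma(h)$ satisfies $\mathrm{dist}(z,\sigma(P^{\mathrm{int}}))<e^{-S_0/h}$. Taking $S_0$ slightly smaller if needed, the combined set $\mathrm{Res}(P)\cup\sigma(P^{\mathrm{int}})$ in the box is then covered by the union of the discs $D(\lambda,e^{-S_0/h})$, $\lambda\in\sigma(P^{\mathrm{int}})$.

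Then I form the clusters. I project onto the real axis the intervals $[\lambda-2e^{-S_0/h},\lambda+2e^{-S_0/h}]$ and take the connected components of their union. There are at most $Ch^{-2}$ intervals, so each component has length at most $Ch^{-2}e^{-S_0/h}$. Shrinking each component by $e^{-S_0/h}$ on both sides gives the intervals $[a_j,b_j]$, and neighbouring ones are then separated by at least $2e^{-S_0/h}$.

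To place the endpoints, I use pigeonhole. The total measure of the covered set is $\mathcal{O}(h^{-2}e^{-S_0/h})$, which is far smaller than $ch^2$. Hence inside $(a-\frac23\delta,a-\frac13\delta)$ there is a gap of length $\ge 2ch^2$ free of all these intervals, and I put $a_1$ at the right end of a free stretch of length $ch^2$. Then $[a_1-ch^2,a_1]$ contains no resonance. I choose $b_{J(h)}$ the same way, and only the clusters lying between $a_1$ and $b_{J(h)}$ are retained.

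Finally, the estimate on $\partial\widetilde\Omega_j$. Points $z$ on $\partial\widetilde\Omega_j$ are at distance at least $e^{-S_0/h}$ from $\sigma(P^{\mathrm{int}})$, or slightly less. This holds on the vertical sides by the gap construction and on the horizontal sides by the choice of $\IM z$. So Proposition~\ref{prop-resolvent} applies and gives the bound $C\alpha(h)^{-1}$.

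The main obstacle is the constant bookkeeping. The exponent $S_0$ must satisfy the distance hypothesis of Proposition~\ref{prop-resolvent} exactly on the boundary, where the distance is about $e^{-S_0/h}$ and not strictly larger. I would handle this by running the clustering with $2e^{-S_0'/h}$ for some $S_0'<S_0$ and then renaming $S_0'$ as $S_0$. The other delicate point is the reference-operator counting bound $\mathcal{O}(h^{-2})$, which must be justified for the magnetic operator; I would use a magnetic Weyl law or Lemma~\ref{lemma-Agmon}-type localization.
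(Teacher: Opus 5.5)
Your proposal is correct and is essentially the paper's argument. The paper's own proof is only a pointer to the ``elementary arguments'' of \cite{NSZ}: the $\mathcal{O}(h^{-2})$ count of $\sigma(P^{\mathrm{int}})\cap[a-\delta,b+\delta]$, plus Proposition~\ref{prop-resolvent}, which puts every resonance within $e^{-S_0/h}$ of $\sigma(P^{\mathrm{int}})$, followed by clustering. Your plan fills in exactly that clustering.

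A few minor corrections. No renaming of $S_0$ is needed: your construction already gives $\mathrm{dist}(z,\sigma(P^{\mathrm{int}}))\ge e^{-S_0/h}$ on all of $\partial\widetilde{\Omega}_j$, and Proposition~\ref{prop-resolvent} only asks for $\ge$. The gap of length $ch^2$ comes from pigeonhole on the at most $Ch^{-2}$ intervals, not from their small total measure. You should search for that gap in $(a-\frac{2}{3}\delta,a-\frac{1}{2}\delta)$ and $(b+\frac{1}{2}\delta,b+\frac{2}{3}\delta)$, so that the retained clusters still cover $[a-\frac{\delta}{2},b+\frac{\delta}{2}]$.
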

\begin{proof}
The first statement follows from the fact that 
$\# (\sigma(P^{\mathrm{int}})\cap [a-\delta, b+\delta])=\mathcal{O}(h^{-2})$ and 
Proposition~\ref{prop-resolvent} by elementary arguments (see \cite{NSZ}). 
The second statement follows from Proposition~\ref{prop-resolvent}.
\end{proof}

Set $ \Pi_j^{\theta}=\frac{1}{2\pi i}\int_{\partial \widetilde{\Omega}_j} (z-P_{\theta})^{-1}dz$ and 
$ \Pi_j^{\mathrm{int}}=\frac{1}{2\pi i}\int_{\partial \widetilde{\Omega}_j} (z-P^{\mathrm{int}})^{-1}dz$. 

\begin{prop}\label{prop-projection}
For any $0<S<S_0$, 
\[ 
\Pi_j^{\theta}= \Pi_j^{\mathrm{int}}+\mathcal{O}(e^{-S/h})\,\,\,\,\,\mathrm{as} \,\,\,h\to 0
\]

\end{prop}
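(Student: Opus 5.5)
We compare the two resolvents on the contour $\partial\widetilde{\Omega}_j$ through a resolvent identity built from the cutoff $\chi_0$. Since $\widetilde{\Omega}_j$ has side lengths $\mathcal{O}(h^{-2}e^{-S_0/h})$, the contour length is $|\partial\widetilde{\Omega}_j|=\mathcal{O}(h^{-2}e^{-S_0/h})$. It therefore suffices to show
\[
\sup_{z\in\partial\widetilde{\Omega}_j}\|(z-P_{\theta})^{-1}-(z-P^{\mathrm{int}})^{-1}\|\le e^{(S_0-S)/h}\cdot e^{-S/h}\,\mathcal{O}(h^{-N}),
\]
or more precisely a bound of order $e^{-S'/h}e^{2S_0/h}$ times the contour length that is still $\mathcal{O}(e^{-S/h})$. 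To make this work we must exploit exponential smallness of the coupling terms $[P,\chi_0]$ twice. Note that on $\partial\widetilde{\Omega}_j$ we have $\mathrm{dist}(z,\sigma(P^{\mathrm{int}}))\ge e^{-S_0/h}$: indeed $\sigma(P^{\mathrm{int}})\cap[a-\delta/2,b+\delta/2]$ lies in $\cup\Omega_k$, and the gaps $a_{k+1}-b_k\ge 2e^{-S_0/h}$. Proposition~\ref{prop-resolvent} then gives $\|\chi_0(P_\theta-z)^{-1}\|\le Ce^{S_0/h}$ and $\|(1-\chi_0)(P_\theta-z)^{-1}\|\le C\alpha^{-1}$ there. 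In fact, replacing $S_0$ by a slightly smaller constant in the definition of the clusters (Lemma~\ref{lemma-cluster}) while keeping the Agmon constant $S_0$, we may arrange a margin. We will use this freedom: the point is that $S<S_0$ is arbitrary, so polynomial and $e^{\epsilon/h}$ losses are affordable.

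Choose $\tilde\chi_0\in\Cinf_c$ with $\tilde\chi_0=1$ near $\supp\chi_0$ and $\supp\partial\tilde\chi_0$ still inside $\{x+V>b+2\delta\}$, where $P^{\mathrm{int}}=P=P_\theta$ near $\supp\tilde\chi_0$. We write
\[
(z-P_\theta)^{-1}=\tilde\chi_0(z-P^{\mathrm{int}})^{-1}\chi_0+(z-P^{\mathrm{ext}}_\theta)^{-1}(1-\chi_0)+\text{error},
\]
in the spirit of \cite{NSZ}. Multiplying by $(z-P_\theta)$ produces error terms $[P,\tilde\chi_0](z-P^{\mathrm{int}})^{-1}\chi_0$ and $[P,\chi_0](z-P^{\mathrm{ext}}_\theta)^{-1}(1-\chi_0)$. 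Applying the Agmon estimate, Lemma~\ref{lemma-Agmon} (for $P^{\mathrm{int}}$, whose well region is disjoint from $\supp\partial\tilde\chi_0$; and for $P^{\mathrm{ext}}_\theta$, whose potential exceeds $b+2\delta$ on $\supp\partial\chi_0$), each commutator term is $\mathcal{O}(e^{-S_0/h})$ times the relevant resolvent norm. Rather than inverting this parametrix, it is cleaner to work directly with the contour integral. We have
\[
\Pi_j^\theta-\Pi_j^{\mathrm{int}}=\chi_0(\Pi^\theta_j-\Pi^{\mathrm{int}}_j)\chi_0+\text{terms with }(1-\chi_0)\text{ on one side}.
\]
For the $(1-\chi_0)$ terms, the function $z\mapsto(1-\chi_0)(z-P^{\mathrm{int}})^{-1}$ is handled by Agmon: $(1-\chi_0)\Pi_j^{\mathrm{int}}$ is the projection onto eigenfunctions localized in the well, hence $\mathcal{O}(e^{-S_0/h}h^{-N})$ after using $\mathrm{rank}\,\Pi^{\mathrm{int}}_j=\mathcal{O}(h^{-2})$. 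Similarly $(1-\chi_0)\Pi^\theta_j$ is treated by writing $(1-\chi_0)(z-P_\theta)^{-1}=(z-P^{\mathrm{ext}}_\theta)^{-1}\bigl((1-\chi_0)+[P_\theta,\chi_0](z-P_\theta)^{-1}\bigr)$. The first part is holomorphic inside $\widetilde{\Omega}_j$ by Proposition~\ref{prop-nontrapping}, so its contour integral vanishes. The second part is bounded by $\alpha^{-1}\cdot e^{-S_0/h}\cdot\|(z-P_\theta)^{-1}\|$, and then the contour length $h^{-2}e^{-S_0/h}$ together with $\|(z-P_\theta)^{-1}\|\lesssim e^{S_0/h}$ gives $\mathcal{O}(h^{-N}e^{-S_0/h})$. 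The same device handles the terms with $(1-\chi_0)$ on the right, using adjoints and $P_\theta^*=P_{\bar\theta}$.

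For the main term, the resolvent identity gives
\[
\chi_0\bigl((z-P_\theta)^{-1}-(z-P^{\mathrm{int}})^{-1}\bigr)\chi_0=\chi_0(z-P_\theta)^{-1}(P_\theta-P^{\mathrm{int}})(z-P^{\mathrm{int}})^{-1}\chi_0.
\]
Since $P_\theta-P^{\mathrm{int}}$ is supported away from $\supp\tilde\chi_0$, we may insert $(1-\tilde\chi_0)$ and use Agmon for $(1-\tilde\chi_0)(z-P^{\mathrm{int}})^{-1}\chi_0$, which via a commutator is $\mathcal{O}(e^{-S_0/h})\|(z-P^{\mathrm{int}})^{-1}\|$. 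This gives a pointwise bound of order $e^{-S_0/h}\cdot e^{S_0/h}\cdot e^{S_0/h}$, which after multiplying by the contour length $h^{-2}e^{-S_0/h}$ is merely $\mathcal{O}(h^{-N})$. This double loss is the main obstacle. To overcome it, I would take nested cutoffs and apply the Agmon estimate on both sides (to $(z-P_\theta)^{-1}$ from the left and to $(z-P^{\mathrm{int}})^{-1}$ from the right), gaining $e^{-2S_0/h}$ against the two resolvent norms $e^{2S_0/h}$. The remaining contour-length factor then yields $\mathcal{O}(h^{-N}e^{-S_0/h})$. Any slack lost in shrinking cutoff supports or in the Agmon constants (each application costing $e^{\epsilon/h}$) is absorbed by choosing $S<S_0$, which gives $\mathcal{O}(e^{-S/h})$.
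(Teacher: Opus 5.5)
\textbf{There is a genuine gap, and it sits in the main term.}

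Your treatment of the pieces with $(1-\chi_0)$ on one side is essentially fine. The split that isolates the holomorphic part $(z-P^{\mathrm{ext}}_\theta)^{-1}(1-\chi_0)$ is not needed: the bound $\|(1-\chi_0)(P_\theta-z)^{-1}\|\le C\alpha(h)^{-1}$ on $\partial\widetilde{\Omega}_j$ from Lemma~\ref{lemma-cluster}, multiplied by the contour length $\mathcal{O}(h^{-2}e^{-S_0/h})$, already suffices.

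In the main term you bound the left factor by $\|\chi_0(z-P_\theta)^{-1}\|\lesssim e^{S_0/h}$. This leaves an integrand of size $e^{S_0/h}$, which the contour length only reduces to $\mathcal{O}(h^{-N})$. You then propose to recover the loss by ``applying the Agmon estimate to $(z-P_\theta)^{-1}$ from the left''. That step is only asserted, and with Agmon estimates alone it cannot work. Lemma~\ref{lemma-Agmon} controls $\chi u$ only for $\chi$ supported in the barrier $\{x+V>b+2\delta\}$. The operator $(1-\tilde\chi_0)(P_\theta-P^{\mathrm{int}})$ to the right of $(z-P_\theta)^{-1}$, however, lives on the whole exterior. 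That exterior includes the classically allowed set $\mathcal{G}^{\mathrm{ext}}$ and the region of complex distortion. How $(z-P_\theta)^{-1}$ acts on data supported there is governed by the exterior problem, namely the non-trapping bound $\|(z-P^{\mathrm{ext}}_\theta)^{-1}\|\le\alpha(h)^{-1}$. No Agmon argument by itself produces a factor $e^{-S_0/h}$ there.

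\textbf{The missing observation:} no second exponential gain is needed, and this is how the paper argues.

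Since $P_\theta-P^{\mathrm{int}}$ vanishes near $\supp\chi_0$, the resolvent identity can be written as $(z-P_\theta)^{-1}(1-\chi_0)(P_\theta-P^{\mathrm{int}})(z-P^{\mathrm{int}})^{-1}$.

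The left factor satisfies $\|(z-P_\theta)^{-1}(1-\chi_0)\|\le C\alpha(h)^{-1}$ on $\partial\widetilde{\Omega}_j$. This is just the adjoint version, via $P_\theta^*=P_{\bar\theta}$, of the bound you already used for $(1-\chi_0)(z-P_\theta)^{-1}$.

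The right factor $(P_\theta-P^{\mathrm{int}})(z-P^{\mathrm{int}})^{-1}\chi_0$ is $\mathcal{O}(1)$. This follows from ellipticity of $z-P^{\mathrm{int}}$ off the well, the Agmon estimate for $P^{\mathrm{int}}$, and $\mathrm{dist}(z,\sigma(P^{\mathrm{int}}))\ge e^{-S_0/h}$. Note that this factor is bounded but not exponentially small, because of the term $C\|(P-z)u\|$ in Lemma~\ref{lemma-Agmon}. For the same reason, your commutator terms are $\mathcal{O}(1)+\mathcal{O}(e^{-S_0/h})\|\text{resolvent}\|$, not $\mathcal{O}(e^{-S_0/h})\|\text{resolvent}\|$.

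Hence the integrand is $\mathcal{O}(h^{-C})$, and the contour length alone gives $(\Pi_j^{\theta}-\Pi_j^{\mathrm{int}})\chi_0=\mathcal{O}(h^{-C-2}e^{-S_0/h})$. Only one resolvent, $(z-P^{\mathrm{int}})^{-1}$, is exponentially large, and a single Agmon gain compensates it. By placing $\chi_0$ on the left of $(z-P_\theta)^{-1}$, you discarded exactly the polynomial bound that closes the argument.
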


\begin{proof}
Since $z-P^{\mathrm{int}}$ is elliptic near $\supp (P_{\theta}-P^{\mathrm{int}})$, we have 
\begin{align*}
\|&(P_{\theta}-P^{\mathrm{int}})(z-P^{\mathrm{int}})^{-1}\chi_0\|\\
&\le C\|(z-P^{\mathrm{int}})(P_{\theta}-P^{\mathrm{int}})(z-P^{\mathrm{int}})^{-1}\chi_0\|_{L^2\to H_m^{-2}} \\
&= C \|[P^{\mathrm{int}}, P_{\theta}-P^{\mathrm{int}}](z-P^{\mathrm{int}})^{-1}\chi_0\|_{L^2\to H_m^{-2}} \\
&\le C+C e^{-S_0/h}\|(z-P^{\mathrm{int}})^{-1}\|
\end{align*}
where the last inequality follows from the Agmon estimate for $P^{\mathrm{int}}$ 
and the fact that $[P^{\mathrm{int}}, P_{\theta}-P^{\mathrm{int}}]$ has bounded coefficients. 
Then 
\begin{align*}
\|(P_{\theta}-P^{\mathrm{int}})(z-P^{\mathrm{int}})^{-1}\chi_0\|\le C
\end{align*}
when $\mathrm{dist}(z, \sigma(P^{\mathrm{int}}))\ge e^{-S_0/h}$. 

Since $\supp \chi_0\cap \supp (P_{\theta}-P^{\mathrm{int}})=\emptyset$, we have
\begin{align*}
 \Pi_j^{\theta}-\Pi_j^{\mathrm{int}}
=\frac{1}{2\pi i}\int_{\partial \widetilde{\Omega}_j}
(z-P_{\theta})^{-1}(1-\chi_0)(P_{\theta}-P^{\mathrm{int}})(z-P^{\mathrm{int}})^{-1}dz. 
\end{align*}
This and Lemma~\ref{lemma-cluster} imply 
\[
\| \left(\Pi_j^{\theta}- \Pi_j^{\mathrm{int}}\right) \chi_0\|\le C|\partial \widetilde{\Omega}_j| \alpha(h)^{-1} 
=\mathcal{O}(e^{-S/h}).
\] 
Finally, we have $\|\Pi_j^{\theta}(1-\chi_0)\|\le C|\partial \widetilde{\Omega}_j| \alpha(h)^{-1} 
=\mathcal{O}(e^{-S/h})$ by Lemma~\ref{lemma-cluster}, 
and $\|(1-\chi_0)\Pi_j^{\mathrm{int}}\|\le Ch^{-2}e^{-S_0/h}
=\mathcal{O}(e^{-S/h})$ by the Agmon estimate.
\end{proof}

\begin{proof}[Proof of Theorem~\ref{thm-1}]
Proposition~\ref{prop-projection} implies that 
$\mathrm{rank}\,\Pi_j^{\theta}=\mathrm{rank}\,\Pi_j^{\mathrm{int}}$ for small $h>0$.
Thus the Weyl law on eigenvalues of $P^{\mathrm{int}}$ completes the proof.
\end{proof}
Remark~\ref{rem-sharp} on the sharp Weyl asymptotics
 also follows from the sharp remainder version of the Weyl law on discrete eigenvalues 
(see Dimassi, Sj\"{o}strand \cite[Section~10]{DS}). 

To prove Theorem~\ref{thm-2}, 
it is enough to study the asymptotic distribution of eigenvalues of $P^{\mathrm{int}}$ near $E$
since we have $\mathrm{rank}\,\Pi_j^{\theta}=\mathrm{rank}\,\Pi_j^{\mathrm{int}}$ for small $h>0$. 
 While Helffer and Sj\"ostrand \cite{HeSjo2} study the distribution of eigenvalues of magnetic 
Schr\"{o}dinger operators near potential minimum, we apply the result from Sj\"ostrand \cite{Sj}, 
which is based on the Birkhoff normal form for pseudodifferential operators. 

\begin{proof}[Proof of Theorem~\ref{thm-2}]
Without loss of generality, we may assume that $(x_0, y_0)=(0, 0)$. 
Moreover a gauge transform and a linear change of variables of $x$, $y$ show that 
we may assume that 
the semiclassical principal symbol of $P^{\mathrm{int}}$ is 
\[
\frac{1}{2}(\xi+By)^2+\frac{1}{2}\eta^2+x+\frac{1}{2}(\lambda_1 x^2+\lambda_2 y^2)+\mathcal{O}(|(x, y)|^3) 
\]
near $(0, 0)$. 
Note that quadratic forms are classified under linear canonical transform (see Matsumoto and Ueki \cite{MU1}). 
Then the quadratic part of this symbol is equivalent to 
\begin{align*}
\frac{\alpha_1}{2}(x^2+\xi^2)+\frac{\alpha_2}{2}(y^2+\eta^2), 
\end{align*}
where $\alpha_1,\alpha_2$ are given by \eqref{alphaj} in Section~\ref{sec-1}. 
Then the following theorem follows from \cite{Sj} (see also Dimassi and Sj\"{o}strand \cite[Chapter~14]{DS}). 
There exists a real-valued smooth function
$$
F(\tau_1,\tau_2;h)\sim F_0(\tau_1,\tau_2)+F_1(\tau_1,\tau_2)h+\cdots+F_j(\tau_1,\tau_2)h^{j}+\cdots,\qquad (\tau_1,\tau_2)\in\mathbb{R}^2,
$$
with $F_j(\tau_1,\tau_2)=\mathrm{const.}$ for $\tau_1+\tau_2\ge 1$, 
$F_0(\tau_1,\tau_2)=\sum_{j=1}^2 \alpha_j\tau_j+\mathcal{O}(|(\tau_1,\tau_2)|^2)$ as $|(\tau_1,\tau_2)|\to 0$
such that, for every fixed $\delta>0$, the eigenvalues of $P^{\mathrm{int}}$ in $(-\infty, E+h^{\delta})$
are of the form
$$
E+F\left(\Bigl(k_1+\frac12\Bigr)h,\Bigl(k_2+\frac12\Bigr)h;h\right)+\mathcal{O}(h^{\infty}),
\qquad k_1,k_2\in\mathbb{Z}_{\ge 0}.
$$
This result on the asymptotic distribution of the eigenvalues of $P^{\mathrm{int}}$ completes the proof of 
Theorem~\ref{thm-2}.
\end{proof}
Note that $F_1(0, 0)=0$.  
This is not explicitly stated in \cite{DS}, but follows from the proof given there when the subprincipal symbol vanishes 
at the nondegenerate minimum. Thus the eigenvalues of $P^{\mathrm{int}}$ and 
the real part of resonances of $P$ near $E$ are of the form 
\[
E+\alpha_1\Bigl(k_1+\frac12\Bigr)h+\alpha_2\Bigl(k_2+\frac12\Bigr)h+\mathcal{O}(h^2)
\]
under the assumption of Theorem~\ref{thm-2}.

\section*{Acknowledgement}
The first author is supported by JSPS KAKENHI Grant Number JP23KJ2090.

\section*{Statements and Declarations}
There are no competing interests to declare. 

\section*{Data availability}
We do not involve any data in this work.

\noindent
Department of Mathematical Sciences, Ritsumeikan University, 
1-1-1, Nojihigashi, Kusatsu-shi, Shiga, 525-8577, Japan

\noindent
E-mail address: kkameoka@fc.ritsumei.ac.jp

\vspace{0.3cm}
\noindent
Department of Robotics, Ritsumeikan University, 
1-1-1, Nojihigashi, Kusatsu-shi, Shiga, 525-8577, Japan

\noindent
E-mail address: n-yoshid@fc.ritsumei.ac.jp

\end{document}